\documentclass{article}

\usepackage[pdftex]{pict2e}
\usepackage{amsthm}
\usepackage{graphicx}
\usepackage{epstopdf}
\usepackage{indentfirst}
\usepackage[numbers]{natbib}
\usepackage{amstext}
\usepackage{enumitem}

\usepackage{stackrel}
\usepackage{mathpazo}
\usepackage{graphicx}
\usepackage[a4paper,bindingoffset=0.2in,%
            left=0.7in,right=0.7in,top=0.9in,bottom=1in,%
            footskip=.25in]{geometry}
\usepackage{amsfonts}
\usepackage{amssymb}
\usepackage{amsthm}
\usepackage{amsmath}
\usepackage{latexsym}
\usepackage{mathrsfs}
\usepackage{color}
\usepackage{dsfont}
\usepackage{url,stackrel,enumitem}	
\usepackage{hyperref}
\usepackage{amsfonts}
\usepackage{stmaryrd}
\usepackage{euscript}
\usepackage{amscd}
\usepackage{bm}
\usepackage{subfig}

\newtheorem{theorem}{Theorem}[section]
\newtheorem{lemma}{Lemma}[section]
\newtheorem{proposition}{Proposition}[section]

\newtheorem{example}{Example}[section]

\newtheorem{definition}{Definition}[section]

\newtheorem{hyp}{Assumption}[section]
\newcommand{\bhyp }{\begin{hyp} \rm }
\newcommand{\ehyp }{\end{hyp}}

\newcommand\I{\mathds{1}}

\newcommand{\pp}{\psi}

\newcommand{\E}{{\mathbb E}}

\newcommand{\Q}{{\mathbb Q}}
\newcommand{\EQ}{{\mathbb E}_{{\mathbb Q}}}

\newcommand{\bbP}{{\mathbb P}}
\newcommand{\bbR}{{\mathbb R}}

\newcommand{\cF}{\mathcal F}

\date{}

\begin{document}
\title{{\Large \bf  Valuation of Variable Annuities with Equity Protection Swaps \\ under Jumps and Default Risks} \vskip 35 pt }

\author{
Marek Rutkowski$^{a,b}$\thanks{%
\noindent\textbf{Acknowledgements.}
The research of M. Rutkowski and H. Xu was supported by the Australian Research Council Discovery Project DP200101550.}
\and
Huansang Xu$^{a,c}$
}

\maketitle

\begin{center}
\normalsize{\today} \\ \vspace{0.5cm}
\small\textit{$^{a}$ School of Mathematics and Statistics, University of Sydney, \\ Sydney, NSW 2006, Australia}\\
\small\textit{$^{b}$ Faculty of Mathematics and Information Science, Warsaw University of Technology, \\ 00-661 Warszawa, Poland}\\
\small\textit{$^{c}$ National University of Singapore, Department of Mathematics,\\ 21 Lower Kent Ridge Road, 119077.}
\end{center}


\begin{abstract}

This paper examines the valuation and hedging of standard equity protection swap (EPS) products proposed by Xu et al. \cite{Xu2024}. To account for financial crises and counterparty default risk, we develop pricing frameworks based on Merton’s jump-diffusion model and Szimayer’s independent random time default model, under which closed-form valuation formulas and put–call parity relations for European options are derived.
Hedging strategies for EPS products are analysed under jump and default risks. While static hedging remains effective in the absence of default, counterparty default risk leads to residual losses that cannot be fully hedged. These losses are quantified and used to define default-adjusted initial premiums under both Black–Scholes and jump-diffusion settings. Numerical results illustrate the effects of jump characteristics and default intensity on hedging costs and premiums, highlighting the importance of incorporating crisis and credit risks in EPS pricing and risk management.
\\[2mm]
\noindent
{\bf Keywords:} {Equity protection swap; Jumps; Default risk; Effective return; Static hedging.}
 
\end{abstract}

\newpage

\section{Introduction} \label{sec6}

Rapid population ageing has increased longevity risk, heightening the demand for retirement products that balance market participation with downside protection. Traditional variable annuities (VAs) address this need through embedded guarantees, while more recent products such as registered index-linked annuities (RILAs) combine equity exposure with limited downside risk in a simpler structure \cite{Moening2021}. Building on this evolution, Xu et al. \cite{Xu2024} propose the equity protection swap (EPS), a flexible insurance-based derivative designed for superannuation members. Unlike VA riders or RILAs, the EPS is not an annuity itself but an add-on insurance product that provides tailored protection against portfolio losses with reduced structural complexity.

For clarity, we consider equity protection swap contracts, with or without variable annuities, that provide protection when the value of the underlying assets declines and collect insurance fees when the assets appreciate.
We now desire to analyze a more general setting in which the holding period of the reference portfolio is variable. In practice, both insurers and investors are highly concerned about financial crises, during which asset values may suddenly fall by a large amount. Although financial crises have been extensively studied and many theories have been proposed to explain their origins and prevention, there is still no consensus, and such crises continue to occur.

In practice, the valuation and hedging of EPS products do not occur simultaneously. The provider must quote a fair initial premium to the investor, after which the investor requires time to decide whether to enter the contract. Only once the contract is signed can the provider construct the hedging portfolio. As a result, the actual hedging cost generally differs from the initially quoted premium. Although the provider may state that the premium is time-dependent, a delay between contract execution and hedging is unavoidable. Consequently, perfect hedging is not achievable in real markets.
An even more adverse scenario arises when severe market turmoil prevents the construction of a hedging portfolio altogether. To capture the unobservable risks generated by such timing mismatches, we incorporate random jumps into the underlying asset dynamics.

In our framework, random jumps represent sudden and substantial equity losses and serve as a stylized way to model financial crises. Incorporating jumps into the dynamics of the EPS reference portfolio therefore provides a more realistic setting for hedging analysis.
Various jump models have been proposed in the literature. The most general class is the jump-diffusion model, in which asset prices evolve according to a diffusion process augmented by jumps whose sizes and distributions depend on the available information, as described by an appropriate $\sigma$-algebra. Kaushik \cite{Ka1993} introduced an early discrete-time framework for option valuation under jump-diffusion dynamics by superimposing multivariate jumps onto a binomial model, allowing for early exercise and arbitrary jump distributions. Later, Cont and Tankov \cite{CT2004} developed a non-parametric calibration approach for exponential Lévy (jump-diffusion) models using observed option prices.
Motivated by these contributions, we adopt the jump-diffusion model to study the impact of jumps during the holding period of standard EPS products.

Furthermore, we will discuss another important situation - credit risks. When a jump happens, it may cause a default event in the real market, especially for negative jumps, this will have a profound effect on the EPS provider's cash flow. Default risk, also known as default probability, is the likelihood that, in accordance with the conditions of the relevant debt instrument, a borrower will not make timely and complete payments of principal and interest. Default risk is one of the two elements of credit risk, along with loss severity. 
In this paper, we will consider a simple case of default - random time default event, according to Szimayer \cite{Szi2005}.
Under this model, the default event can be seen as jumps out of assets that the event risk is modelled by a random time $\tau$ which announces the event. A random time $\tau$ associated with an indicator process is used and the indicator process has a density that corresponds to the frequency of default. Szimayer \cite{Szi2005} studied the valuation of American options in the presence of external/non-hedgeable event risk by using a random time to represent the jump, or default actually. When the event occurs, the American option is terminated and a rebate is paid instead of the promised pay-off profile. We find that under this model, this random time default event can be regarded as a discount factor similar to the interest rate which is independent of the dynamics of assets. We can consider this random time default event as a jump-to-ruin situation and the valuation and hedging are easier under this model than under the jump-diffusion model.
As a remark, we need to mention that regular jumps can represent known dividends from equities which is also a meaningful study for the real market. 

In the following section, we describe the key features of the standard EPS product and give the definition of the structure of this product. 
Some simple examples of the standard EPS products - buffer EPS, floor EPS, and floor-cap EPS - are provided in Section \ref{sec6.0} according to Xu et al. \cite{Xu2024}. Then Section \ref{sec6.3+} is devoted to providing two different models and valuation formulas under these models. Considering two influence factors in this paper, we use Merton's jump-diffusion model \cite{Merton1976} to represent unknown financial crisis and the influence of the time difference between pricing and hedging, and Szimayer's independent random time default model \cite{Szi2005} to analyse the credit risks. We further provide the valuation formulas of European options and put-call parity under these two models.

In Section \ref{sec6.4+}, we analyse the hedging strategies under the consideration of financial crisis, cancellation risk, execution risk and credit risks. Firstly, hedging strategies for standard EPS products under the jump-diffusion model without consideration of default risks have been given, and these are actually the same as under the Vanilla case discussed in Xu et al. \cite{Xu2024}. We further discuss the situation that only the counter party of the EPS provider has default risk, and we consider the normal case without jumps for simplicity. Here we cannot find the static hedging strategy that can fully hedge the EPS product under this situation, and thus we find the EPS provider's cash flow after applying the static hedging strategy under the Vanilla case. We can see that if we assume only the counter party has the probability of default, the EPS provider will still face a potential loss after hedging and the "fair" initial premium defined under the Vanilla case is no longer "fair" for the EPS provider under the consideration of default risks. Fortunately, the potential loss is limited by the initial nominal principal of the EPS product and we can calculate the EPS provider's expected loss after hedging. After adding a default adjustment, we provide the definition of default adjusted initial premium when only the counter party with default risks, and we consider both the standard Black-Scholes model and Merton's jump-diffusion model.

Finally, we present numerical illustrations for the standard EPS product with the consideration of jumps and default events in Section \ref{sec6.5}. We start with the valuation of European options under the jump model in order to further analyse the influence of jumps on the hedging costs of the standard EPS products. We consider the jump size to satisfy a normal distribution with different parameters and different jump intensities, and these jumps are assumed to be negative which represents the financial crisis. Then we discuss the influence of different characteristics of jumps, such as jump intensity, jump size, and jump to ruin (default), on the hedging cost under the jump-diffusion model. Moreover, we consider the number of jumps in the holding period of the EPS products to be no more than one, exactly one, and around 20 times. We also consider the hedging costs with default risks independently and we use different default intensities to represent probabilities of default. Because the static hedging strategy cannot fully hedge the EPS provider's cash flow, we further present a numerical study for the default-adjusted initial premiums, whereas the underlying asset satisfies Merton's jump-diffusion model and Szimayer's independent random time default event is included in the modelling. Lastly, we give a conclusion in Section \ref{sec8} to summarise our findings and the importance of this study.

\section{Preliminary} \label{sec6.0}

In this section, we introduce the valuation framework for Equity Protection Swap (EPS) contracts and outline two jump-based models that will be specified later. We focus on standard EPS contracts, which have a single terminal payoff at maturity $T$. The payoff depends on the nominal principal $N_p$ and the simple rate of return of an underlying reference portfolio $S$ over the contract period $[0, T]$. The simple return is defined as $R_T = (S_T-S_0)/S_0 > -1$.

Following Xu et al. \cite{Xu2024}, the terminal cash flow from the perspective of the protection provider is given by $N_p \pp (R_T)$, where $\pp : \bbR \to \bbR$ characterizes the EPS payoff structure. We assume that $\pp$ is a continuous, non-decreasing, piecewise linear function satisfying $\pp (0)=0$. Unless stated otherwise, we normalize the nominal principal to one, i.e. $N_p=1$.

\begin{lemma} \label{lem0.2.1}
In a \textit{standard equity protection swaps} starting at time $0$ with the maturity date $T$ and the reference portfolio $S$,
the cash flow for the EPS provider is specified by the \textit{adjusted return} $\pp (R_T)$.
We have that $\pp (R) \leq 0$ for $R \in (-\infty ,0]$ and $\pp (R)\geq 0$ for $R \in [0,\infty)$. For $i=0,1,\dots,n$  and $j=0,1,\dots ,m$, the {\it protection leg} and the {\it fee leg} of an EPS are given by

\begin{align*}\label{eq0.2.1}
\pp^p(R)&:=\pp(R)\I_{\{R<0\}} =\Big(\sum_{k=0}^{i-1}p_{k+1}\Delta l_k+p_{i+1}(R-l_i)\Big)\I_{\{R \in (l_{i+1},l_i)\}}, \\
\pp^f(R)&:=\pp(R)\I_{\{R>0\}} =\Big(\sum_{k=0}^{j-1}f_{k+1}\Delta g_k+f_{j+1}(R-g_j)\Big)\I_{\{R \in (g_j,g_{j+1})\}},
\end{align*}

where $\Delta l_i=l_{i+1}-l_i<0$ and $\Delta g_j=g_{j+1}-g_j>0$ and recall that $g_0=l_0=0$.
We have $\pp (R_T)=\pp^p(R_T)+\pp^f(R_T)$, where the non-positive payoff $\pp^p(R_T)$ and the non-negative payoff $\pp^f(R_T)$ represent the provider's loss (protection payout) and gain (fee income) from an EPS, respectively.
\end{lemma}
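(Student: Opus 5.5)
The argument uses only the standing hypotheses on $\pp$: it is continuous, non-decreasing and piecewise linear with $\pp(0)=0$. I read the breakpoints as $\dots<l_2<l_1<l_0=0=g_0<g_1<g_2<\dots$, with slopes $p_{k+1}$ on $(l_{k+1},l_k)$ and $f_{k+1}$ on $(g_k,g_{k+1})$. I also interpret the indicator expressions in the statement as sums over $i$ (respectively $j$) of the displayed terms, the intervals being disjoint.

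\textbf{Sign property.} Monotonicity together with $\pp(0)=0$ gives the signs directly. For $R\le 0$ we have $\pp(R)\le\pp(0)=0$, and for $R\ge0$ we have $\pp(R)\ge \pp(0)=0$. In particular all slopes $p_{k+1},f_{k+1}$ are non-negative, since $\pp$ is non-decreasing on each linear piece.

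\textbf{Explicit formula on the negative side.} Fix $R\in(l_{i+1},l_i)$. Since $\pp$ is continuous and piecewise linear, it is absolutely continuous with a.e. derivative equal to the piecewise constant slope. I would therefore write
\[
\pp(R)=\pp(0)-\int_R^0 \pp'(x)\,dx = -\sum_{k=0}^{i-1}p_{k+1}(l_k-l_{k+1}) - p_{i+1}(l_i-R).
\]
Because $\Delta l_k = l_{k+1}-l_k$, this rearranges to $\sum_{k=0}^{i-1}p_{k+1}\Delta l_k+p_{i+1}(R-l_i)$, which is the stated form. Equivalently, one can do an induction on $i$ using continuity at each node $l_k$: the value at $l_{k+1}$ equals the value at $l_k$ plus $p_{k+1}\Delta l_k$. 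Multiplying by $\I_{\{R<0\}}$ and summing over the disjoint intervals gives $\pp^p$.

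\textbf{Positive side and decomposition.} The same computation on $(g_j,g_{j+1})$ gives
\[
\pp(R)=\int_0^R\pp'(x)\,dx=\sum_{k=0}^{j-1}f_{k+1}\Delta g_k+f_{j+1}(R-g_j).
\]
Finally, $\I_{\{R<0\}}+\I_{\{R>0\}}=1-\I_{\{R=0\}}$ and $\pp(0)=0$, so $\pp=\pp^p+\pp^f$ holds everywhere. The signs proved above show that $\pp^p(R_T)\le0$ and $\pp^f(R_T)\ge0$, which gives the loss/gain interpretation.

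\textbf{Main obstacle.} The mathematics is light; the real difficulty is notational bookkeeping. The statement's indicators exclude the breakpoints $l_i$ and $g_j$ themselves. I would handle these points by continuity: the two adjacent formulas agree at each node, so the value there is unambiguous. I would also fix explicitly the conventions $l_{n+1}=-1$ (or $-\infty$) and $g_{m+1}=\infty$, so that the intervals cover the whole real line.
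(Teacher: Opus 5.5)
The paper gives no proof of this lemma. It is carried over from Xu et al.\ and serves essentially as the definition of the legs $\pp^p,\pp^f$ and of the parameters $p_k, f_k, l_i, g_j$ that describe a continuous, non-decreasing, piecewise linear $\pp$ with $\pp(0)=0$. So there is no argument in the paper to compare against, and your proof supplies the missing verification correctly:
\begin{itemize}
\item The sign property follows from monotonicity together with $\pp(0)=0$.
\item The closed forms follow from integrating the piecewise constant slope from $0$, or equivalently from telescoping $\pp(l_{k+1})=\pp(l_k)+p_{k+1}\Delta l_k$ across the nodes.
\item The decomposition $\pp=\pp^p+\pp^f$ needs only $\pp(0)=0$.
\end{itemize}

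Three small points are worth stating explicitly; your argument already accommodates all of them. First, identifying $p_{k+1}$ and $f_{k+1}$ as the slopes on $(l_{k+1},l_k)$ and $(g_k,g_{k+1})$ is part of the definition, not something you deduce. Second, if $\pp$ has no kink at $0$, the node $l_0=g_0=0$ is artificial and simply forces $p_1=f_1$. Third, since $\pp$ is defined on all of $\bbR$, take $l_{n+1}=-\infty$, as the paper's buffer EPS does. The choice $l_{n+1}=-1$ only covers the values $R=R_T>-1$ that actually occur.
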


{\bf Examples of Standard EPS Products}

We now introduce three standard EPS products with specific payoff structures. These products will serve as benchmark examples for the subsequent analysis. We first define the {\it buffer EPS}. Since a buffer is applied to both the protection and fee legs, this product may also be referred to as a double-buffer EPS.

\begin{definition} \label{BEPS}
{\rm The (double) \textit{buffer EPS} is obtained by taking $p_1=f_1=0$ (hence $\pp_B (R)=0$ for all $R \in [l_1,g_1]$ where $l_1<0$ and $g_1>0$) and some values for $p_2$ and $f_2$ in $(0,1]$. Hence the payoff profile at maturity $T$ of a buffer EPS for the provider equals}
\begin{align*}
\pp_B (R_T)& := \pp^{p,2}(R_T)+\pp^{f,2}(R_T) = p_2(R_T-l_1) \I_{\{R_T \in (-\infty,l_1)\}}+f_2(R_T-g_1)\I_{\{\{R_T\in (g_1,\infty)\}}\\
& = -p_2(l_1-R_T)^+ +f_2(R_T-g_1)^+.
\end{align*}
\end{definition}

The {\it floor EPS} is obtained by setting $p_2=0$, $p_1\in(0,1]$, $f_1=0$, and $f_2\in(0,1]$. In this case, losses above the level $l_1$ are fully covered by the provider, while losses below $l_1$ are not covered. As before, the extreme case $f_2=1$ corresponds to full participation in gains above the cap $g_1$.
By convention, the buffer is applied to the fee leg, as in the {\it buffer EPS}. This reflects the practical consideration that holders typically prefer to pay fees only when their gains over $[0,T]$ are sufficiently large.

\begin{definition} \label{FEPS}
{\rm The \textit{floor EPS} is specified by $p_2 =0, p_1 \in (0,1], f_1 =0, f_2 \in (0,1]$ and thus its
payoff profile for the provider equals}
\begin{align*}
\pp_F (R_T)&:= \pp^{p,3}(R_T)+\pp^{f,2}(R_T)=p_1 l_1\I_{\{R_T \in (-\infty,l_1]\}}-p_1(-R_T)^+\I_{\{R_T \in (l_1,0]\}}+f_2(R_T-g_1)^+
\\ &=-p_1 (-R_T)^++p_1(l_1-R_T)^++f_2(R_T-g_1)^+.
\end{align*}
\end{definition}

Finally, a {\it floor-cap EPS} can be viewed as an extension of the {\it buffer EPS}, with both a floor and a cap added on either side of the buffer. Although the most general specification involves three protection and three fee legs, for comparability we consider a simplified structure with complexity similar to the buffer and floor EPS defined above.

\begin{definition} \label{FCEPS}
{\rm The \textit{floor-cap EPS} is specified by $p_2 =0, p_1 \in (0,1], f_1 \in (0,1], f_2=0$ and thus its
payoff profile for the provider equals}
\begin{align*}
\pp_{FC} (R_T)&:= \pp^{p,3}(R_T)+\pp^{f,3}(R_T)
\\ &=p_1 l_1\I_{\{R_T \in (-\infty,l_1]\}}-p_1(-R_T)^+\I_{\{R_T \in (l_1,0]\}}+f_1 R_T^+ \I_{\{R_T \in (0,g_1]\}} + f_1 g_1 \I_{\{R_T \in (g_1,\infty)\}}
\\ &=-p_1 (-R_T)^+ +p_1(l_1-R_T)^+ +f_1 R_T ^+ -f_1(R_T-g_1)^+.
\end{align*}
\end{definition}

Furthermore, we introduce a general jump-diffusion framework to study hedging costs in the presence of jumps. Random default times are also discussed as a related extension of the jump setting.

\section{Modelling Framework} \label{sec6.3+}

As we already discussed in the introduction, the existence of a temporal discrepancy between EPS provider's hedging and pricing is beyond debate when we examine our EPS products in the actual market. The consumer will be informed of the fair starting premium before or close to the contract signing time, which is inconsistent with the hedging time of the EPS provider. It goes without saying that the true cost of hedging will differ from the reasonable initial premium that was previously disclosed to the client. Furthermore, there is an even worse scenario when the hedging portfolio cannot be built due to extreme market turbulence. In conclusion, we must take jumps into account because there is no perfect hedging in the real market.

Thus after defining the structure of standard EPS, we now introduce the jump model to the underlying asset valuation. Here we will give the definition of the general jump model - the jump-diffusion model, which we will use to discuss the influence of jumps on the pricing and hedging of standard EPS products in this paper. 

Besides jumps, cancellation risks and execution risks of the options should also included in the consideration of EPS products' hedging strategy, especially after the consideration of jumps. It is important to note that credit risks will have a significant impact on the hedging strategies and fair initial premiums of the standard EPS with consideration of jumps. In this paper, we will discuss the independent random time default event, which is a hazard process applied to the underlying asset. This independent default event can be seen as the simplest case of consideration of default risks, which is an external/non-hedgeable event risk. Although it may not be very consistent with the real market situation that the default event may related to jumps directly, it is an easy access to evaluate default risks and still meaningful.

Now we will start with the structure of the jump-diffusion model and then the independent random time default model. We will find the valuation formulas of the financial derivatives under these two models, which describe jumps and default events. The explicit pricing formulas of the European options will be given in this section that we will use for the hedging strategies later. 

\subsection{Jump-Diffusion Model} \label{sec6.1}

We consider the general model when analysing jumps, the jump-diffusion model, which was first introduced by Merton \cite{Merton1976}, in order to do further hedging strategies for Equity Protection Swap (EPS) with jumps. Jumps are related to the values of stocks directly in the jump-diffusion model, which means they represent instant huge capital gains or losses rather than default probabilities. Jump diffusion models are particular cases of exponential L{\'e}vy models in which the frequency of jumps is finite. They can be considered prototypes for a large class of more complex models such as the stochastic volatility plus jumps model.

Using the same algorithms of Merton \cite{Merton1976} and Kaushik \cite{Ka1993}, consider a market with a riskless asset (the bank account) and one risky asset (the stock) whose price at time $t$ is denoted by $S_t$. Then we have the definition of the stock process.

\begin{definition} \label{jump_S}{\rm
In a jump-diffusion model, the dynamics of the stock price $S_t$ is given as}

\begin{equation} \label{eq6.1.1}
dS_t = \mu S_{t^-}\,dt + \sigma S_{t^-}\,dW_t + S_{t^-}\,dJ_t ,
\end{equation}

{\rm where $W_t$ is a Brownian motion under the real market measure $\bbP$ and}
\begin{equation*}
J_t = \sum_{i=1}^{N_t} Y_i    
\end{equation*}
{\rm is a compound Poisson process where the jump sizes $Y_i$ are independent and identically distributed with distribution $F$ and number of jumps $N_t$ is a Poisson process with jump intensity $\lambda$.}
\end{definition}

The asset price $S_t$ thus follows geometric Brownian motion between jumps. Monte Carlo simulation of the process can be carried out by first simulating the number of jumps $N_t$, the jump times, and then simulating geometric Brownian motion on intervals between jump times.

Firstly, the stochastic differential equation \ref{eq6.1.1} has the exact solution according to It{\'o}'s formula under the real market measure $\bbP$:

\begin{equation*} 
S_t = S_0 \exp \big\{ \mu t + \sigma W_t - \frac{\sigma^2}{2}t + J_t \big\}.    
\end{equation*}

If we want to use the above model in the pricing formula, it is required that the asset grows at a risk-free rate, so that the discounted prices become a martingale under the risk-neutral measure. The logical thing that we need to do is find the risk-neutral drift $\widehat{\mu}$ and jump compensator $\mu_J$, then take the other parameters as risk neutral too. We have the following theorem:

\begin{theorem} \label{compensator}{\rm
If the stock satisfies dynamics given in Eq. \ref{eq6.1.1}, the jump compensator $\mu_J$ is given as follows under an equivalent measure, $\Q \sim \bbP$:}

\begin{equation*}
\mu_J = - \lambda \int \zeta F (d\zeta).    
\end{equation*}
\end{theorem}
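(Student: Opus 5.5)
The plan is to identify $\mu_J$ as the drift correction that makes the pure-jump part of \eqref{eq6.1.1} a martingale. Under $\Q$ we then have $\mu_J=-\lambda\,\E[Y_1]$, with $\lambda$ and $F$ taken as the risk-neutral parameters (as the text before the theorem allows). I would proceed in three steps.

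\emph{Step 1.} Use Girsanov's theorem for the Brownian part. Set $\widehat W_t = W_t + \theta t$ with a suitable constant $\theta$, and choose an equivalent measure $\Q\sim\bbP$ under which $\widehat W$ is a Brownian motion and the compound Poisson process $J$ keeps intensity $\lambda$ and jump law $F$. One can take the density $\exp\{-\theta W_T-\tfrac12\theta^2T\}$. Since $W$ and $J$ are independent under $\bbP$, this change of measure leaves the law of $J$ unchanged, so the independence of $\widehat W$ and $J$ survives. This is the standard Merton construction, where jump risk is assumed not to be priced.

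\emph{Step 2.} Under $\Q$ the dynamics become
\begin{equation*}
dS_t=(\mu-\sigma\theta)S_{t^-}\,dt+\sigma S_{t^-}\,d\widehat W_t+S_{t^-}\,dJ_t .
\end{equation*}
The key computation is the compensator of $J$. By the independence of $N$ and the i.i.d.\ sizes $Y_i$, Wald's identity gives $\EQ[J_t]=\EQ[N_t]\,\EQ[Y_1]=\lambda t\int\zeta F(d\zeta)$. For the martingale property, the process $\widetilde J_t := J_t-\lambda t\int\zeta F(d\zeta)$ has independent stationary increments with zero mean. Hence $\EQ[\widetilde J_t-\widetilde J_s\mid\cF_s]=0$, and $\widetilde J$ is a $\Q$-martingale. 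Consequently the deterministic drift that must be added to $J$ to compensate it is
\begin{equation*}
\mu_J=-\lambda\int\zeta F(d\zeta).
\end{equation*}

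\emph{Step 3.} Rewrite the dynamics as
\begin{equation*}
dS_t=(\mu-\sigma\theta-\mu_J)S_{t^-}\,dt+\sigma S_{t^-}\,d\widehat W_t+S_{t^-}\,d(J_t+\mu_J t).
\end{equation*}
Choose $\theta$ so that $\mu-\sigma\theta-\mu_J=r$, that is, the risk-neutral drift is $\widehat\mu=r+\lambda\int\zeta F(d\zeta)$ before compensation. Then $d(e^{-rt}S_t)=e^{-rt}S_{t^-}\big(\sigma\,d\widehat W_t+d\widetilde J_t\big)$ is a stochastic integral against $\Q$-martingales, so it is a local martingale. It is a true martingale if $\int|\zeta|F(d\zeta)<\infty$, because the integrand is then square-integrable on $[0,T]$. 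This confirms that $\mu_J$ is exactly the compensator term required.

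The main obstacle is integrability and non-uniqueness. The market is incomplete, so $\Q$ is not unique. The statement should therefore be read as holding for the equivalent measure that preserves $(\lambda,F)$. I would state this assumption explicitly, together with $\int|\zeta|F(d\zeta)<\infty$, and $\zeta>-1$ for positivity of $S$. I would also note the inconsistency in the paper's displayed solution, which uses $J_t$ where $\sum\ln(1+Y_i)$ is needed; it does not affect the compensator computation, which only uses the SDE form.
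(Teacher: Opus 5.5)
Your proof is correct and is essentially the paper's argument made rigorous. The paper simply takes expectations of $dS_t$ and equates $rS_t\,dt$ with $\mu S_t\,dt+\lambda\int\zeta F(d\zeta)\,S_t\,dt$. You instead construct $\Q$ explicitly by a Girsanov change on $W$ alone, which preserves $\lambda$ and $F$, and verify that $J_t+\mu_J t$ is a genuine $\Q$-martingale; this also separates the $\bbP$-drift $\mu$ from the $\Q$-drift $\mu-\sigma\theta=r+\mu_J$, which the paper conflates.

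There are two harmless slips. First, the uncompensated risk-neutral drift is $r+\mu_J=r-\lambda\int\zeta F(d\zeta)$, not $r+\lambda\int\zeta F(d\zeta)$. Second, the true-martingale property of $e^{-rt}S_t$ comes from its product form with independent mean-one factors, not from square-integrability, which $\int|\zeta|F(d\zeta)<\infty$ does not provide.
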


\begin{proof}
Firstly,  we can get that the drift term $\mu$ in Eq. \ref{eq6.1.1} is given by the combination of a risk-neutral drift $\widehat{\mu}$ and a jump compensator $\mu_J$:

\begin{equation*}
\mu = \widehat{\mu} + \mu_J.    
\end{equation*}

To simplify the presentation, we henceforth assume zero dividends so that $\widehat{\mu} = r$, which is the risk-free rate.
By taking expectations of Eq. \ref{eq6.1.1} and using definition of risk-neutral drift $r$, we have

\begin{equation*}
\E(dS_t) = r S_t\,dt = \mu S_t\,dt + 
\lambda \Big\{\int \zeta F (d\zeta)\Big\} S_t dt    
\end{equation*}

where $F$ is the risk-neutral jump distribution. Then we can identify the jump compensator $\mu_J$ under the equivalent measure $\Q$.
\end{proof}

As a remark, jump-diffusion models are incomplete market models, since there is more than one equivalent martingale measure $\Q \sim \bbP$ under which the process of the discounted asset price becomes a martingale. In terms of the hedging portfolio, this is equivalent to saying that there is no way of building a completely risk-free hedging portfolio.

According to Theorem \ref{compensator}, by changing the drift of the Brownian motion and keeping the rest unchanged, we can get an equivalent measure $\Q \sim \bbP$. Under this measure $\Q$, the stock price satisfies

\begin{equation} \label{eq6.1.2}
S_t = S_0 \exp \big\{ (r - \frac{\sigma^2}{2} +\mu_J)t + \sigma W_t + J_t \big\}.    
\end{equation}

Before we find the pricing formula of European options for underlying assets with jumps, we need to define the forward price $P:= S_0 e^{rT}$, then we can get the characteristic function. 

\begin{definition} \label{characteristic}{\rm
With a stock $S$ and its forward price $P = S_0 e^{rT}$, define $x_t := \log(S_t/P)$, which is a L{\'e}vy process. The characteristic function of this L{\'e}vy process is given as follows}

\begin{equation} \label{eq6.1.3}
\phi_T(u):= \E (e^{iu x_T}).    
\end{equation}

{\rm This characteristic function $\phi_T(u)$ has the L{\'e}vy-Khintchine representation}

\begin{equation} \label{eq6.1.4}
\phi_T(u) = \exp \Big\{ iu \big(\mu_J - \frac{\sigma^2}{2} \big)T 
- \frac{u^2 \sigma^2}{2}T + T \int [e^{iu \zeta}-1] 
v(\zeta)\,d\zeta \Big\}.   
\end{equation}
\end{definition}

Now we make a typical assumption for the distribution of jump sizes $Y_i$, which is the normal distribution same as in the original paper by Merton \cite{Merton1976}. In Merton \cite{Merton1976} model, the L{\'e}vy density $v(\cdot)$ is given by

\begin{equation*}
v(\zeta) = \frac{\lambda}{\sqrt{2 \pi} \delta} \exp 
\bigg\{ -\frac{(\zeta - \alpha)^2}{2\delta^2} \bigg\}
\end{equation*}

where $\alpha$ is the mean of the log-jump size $\log J$ and $\delta$ the standard deviation of jumps. 

\begin{proposition}
By assuming the distribution of jump sizes $Y_i$ to be a normal distribution, the explicit characteristic function is displayed by

\begin{equation*}
\phi_T(u) = \exp \Big\{ iu wT - \frac{1}{2} u^2 \sigma^2 T + \lambda T
\big( e^{iu \alpha - \frac{u^2 \sigma^2}{2} - 1} \big) \Big\}
\end{equation*}

with
\begin{equation*}
w = -\frac{1}{2} \alpha^2 - \lambda \big( e^{iu \alpha - \frac{u^2 \sigma^2}{2} - 1} \big).   
\end{equation*}
Here $\alpha$ is the mean of the log-jump size $\log J$.
\end{proposition}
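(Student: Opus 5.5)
The plan is to substitute the Merton L\'evy density $v(\zeta)$ into the L\'evy--Khintchine representation (\ref{eq6.1.4}) of Definition \ref{characteristic}, evaluate the jump integral in closed form, and then fix the drift by the risk-neutral (martingale) requirement behind Theorem \ref{compensator}. I read the displayed formula with the evident typographical corrections. The jump term should be $\lambda T\big(e^{iu\alpha-\frac{u^2\delta^2}{2}}-1\big)$, with the $-1$ outside the exponential and the jump variance $\delta^2$ in place of $\sigma^2$. The drift $w$ should be the $u$-independent constant that collects the diffusion correction and the jump compensator. I will prove this corrected version and note explicitly where it differs from the printed expression.

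\textbf{Step 1: the jump integral.} Since $v(\zeta)=\lambda\,\varphi_{\alpha,\delta}(\zeta)$, where $\varphi_{\alpha,\delta}$ is the $N(\alpha,\delta^2)$ density, we have
\[
\int [e^{iu\zeta}-1]\,v(\zeta)\,d\zeta=\lambda\big(\E[e^{iuY}]-1\big)\quad\text{with } Y\sim N(\alpha,\delta^2).
\]
The Gaussian characteristic function then gives $\E[e^{iuY}]=e^{iu\alpha-\frac12u^2\delta^2}$. This can be checked directly by completing the square in the exponent, $iu\zeta-(\zeta-\alpha)^2/(2\delta^2)$, and integrating. Substituting into (\ref{eq6.1.4}) produces the term $\lambda T\big(e^{iu\alpha-\frac{u^2\delta^2}{2}}-1\big)$, together with the linear term $iu\,wT$ and the diffusion term $-\frac12u^2\sigma^2T$.

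\textbf{Step 2: identifying $w$.} Since $x_T=\log(S_T/P)$ with $P=S_0e^{rT}$, the martingale property of the discounted price under $\Q$ is equivalent to $\E[e^{x_T}]=1$, that is, $\phi_T(-i)=1$. Evaluating the expression from Step 1 at $u=-i$ gives
\[
wT+\tfrac12\sigma^2T+\lambda T\big(e^{\alpha+\frac{\delta^2}{2}}-1\big)=0,
\quad\text{hence}\quad
w=-\tfrac12\sigma^2-\lambda\big(e^{\alpha+\frac{\delta^2}{2}}-1\big).
\]
This is the intended form of the stated $w$. It must be a constant independent of $u$: the printed version with $iu$ inside should be read as the expression evaluated at $u=-i$.

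\textbf{Step 3: consistency with (\ref{eq6.1.2}).} I will check that this $w$ agrees with the drift $\mu_J-\sigma^2/2$ in (\ref{eq6.1.2}). For jumps entering through $J_t$ in the exponent, the compensator must be $-\lambda(\E[e^{Y}]-1)$. Theorem \ref{compensator} instead writes $-\lambda\int\zeta F(d\zeta)$, which is the correct form for proportional jump sizes in the SDE (\ref{eq6.1.1}). The two agree once $F$ is taken to be the law of $e^{Y}-1$.

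\textbf{Main obstacle.} The analysis itself is routine, and the only genuine computation is the complex Gaussian integral, which is justified by analytic continuation of the moment generating function. The real difficulty is reconciling the conventions in the statement: the simple versus log jump size, the placement of $-1$, $\sigma$ versus $\delta$, and the definition of $w$. I will resolve these by deriving $w$ from the martingale condition in Step 2 rather than matching the printed expression term by term.
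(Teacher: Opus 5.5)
Your proposal is correct and is essentially the derivation the paper intends. The paper gives no written proof; the statement is just the Merton density substituted into (\ref{eq6.1.4}). Fixing $w$ by $\phi_T(-i)=1$ gives the same result as reading off $w=\mu_J-\frac{\sigma^2}{2}$ from (\ref{eq6.1.4}) with $\mu_J=-\lambda\bigl(e^{\alpha+\delta^2/2}-1\bigr)$, the value used in the proof of Proposition \ref{normal}.

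Your corrections are necessary rather than cosmetic, since the printed formula gives $\phi_T(0)=e^{\lambda T/e}\neq 1$. You should also state explicitly that the printed $-\frac{1}{2}\alpha^2$ in $w$ must be $-\frac{1}{2}\sigma^2$: your Step 2 produces this, but your list of discrepancies does not mention it.
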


\subsection{Independent Random Time Default} \label{sec6.2}

Now we will introduce the independent random time default into our valuation of stock prices, which is a relatively simple model compared with the jump-diffusion model. We need to mention that the random time default event can be seen as a jump-to-ruin situation that describes default probabilities. Here, we model a random time $\tau$ which is associated with some indicator process to represent random default. Unlike the jump-diffusion model, these default events are independent of the value process of assets because we use a random time that is out of the financial market, thus it cannot be shown in the dynamics of assets.

Consider a market with a riskless asset (the bank account) and one risky asset (the stock) whose price at time $t$ is denoted by $S_t$. By using the same modelling structure in Szimayer \cite{Szi2005}, we now have the definition for the independent random time default event.

\begin{definition} \label{random_t}{\rm
For an independent random time default event under probability measure $\bbP$, the random time $\tau$ is associated with its indicator process $D$ (means default) via}

\begin{equation*}
D_t = \I_{\{\tau \leq t\}}, \quad \text{for} \quad 0 \leq t \leq T.   
\end{equation*}



{\rm The tacit underlying assumption that the event risk is an extraneous risk is now formalised: the indicator process $D$ is assumed to have a $\mathbf{F}$ - compensator admitting a density $\gamma^P$, which is predictable and strictly positive on $[0,\tau]$ under probability measure $\bbP$, i.e.,}

\begin{equation} \label{eq6.2.1}
M_t = D_t - \int_0^{t \wedge \tau} \gamma^P_u\,du, \quad \text{for} \quad 0 \leq t \leq T,    
\end{equation}

{\rm defines a $\mathbf{F}$-martingale with respect to $\bbP$.}
\end{definition}

Under this model, we consider $\tau$ is the first default time of a Cox process. In addition, the assumption that $\gamma^P$ is constant results in $\tau$ being an exponentially distributed stopping time, moreover, $\tau$ is independent of the financial market $(B,S,\mathbf{F})$. The independent holds of course also if intensity $\gamma^P$ is deterministic. Then we can find the following proposition by considering the distribution of random time.

\begin{proposition} \label{prop6.2.1}
Under the random time default model, we can find the probability of random time $\tau$ that has the following consequence

\begin{equation} \label{eq6.2.2}
P(\tau > t | \mathcal{F_T}) = \exp{\Big( -\int_0^t \gamma_u^P du \Big)}, \quad \text{for} \quad 0 \leq t \leq T,    
\end{equation}

and unconditionally that

\begin{equation*} 
P(\tau > t) = \E_P \Big\{ \exp{\Big(-\int_0^t \gamma_u^P du \Big)} \Big\}, \quad \text{for} \quad 0 \leq t \leq T.    
\end{equation*}
\end{proposition}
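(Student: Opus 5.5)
The plan is to use the standard Cox-process (doubly stochastic) construction that underlies Definition \ref{random_t}, as in Szimayer \cite{Szi2005}. Take a random variable $E$, exponentially distributed with parameter one under $\bbP$ and independent of $\mathcal{F}_T$, and write $\Gamma_t := \int_0^t \gamma^P_u\,du$. Since $\gamma^P$ is strictly positive, $\Gamma$ is continuous and strictly increasing. Set
\begin{equation*}
\tau := \inf\{ t \geq 0 : \Gamma_t \geq E \}.
\end{equation*}
This gives the identity of events $\{\tau > t\} = \{\Gamma_t < E\}$.

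First, I will check that this $\tau$ is consistent with Definition \ref{random_t}. The indicator process $D_t = \I_{\{\tau \le t\}}$ then has compensator $\int_0^{t\wedge\tau}\gamma^P_u\,du$ in the enlarged filtration, so $M$ in \eqref{eq6.2.1} is a martingale. This is the usual hazard-process computation. The key fact is that, for $s<t$,
\begin{equation*}
\bbP(\tau \le t \mid \tau > s, \mathcal{F}_T) = 1 - e^{-(\Gamma_t - \Gamma_s)},
\end{equation*}
and differentiating this in $t$ yields the intensity $\gamma^P$.

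Second, I will compute the conditional survival probability. The variable $\Gamma_t$ is $\mathcal{F}_T$-measurable for $t \le T$, and $E$ is independent of $\mathcal{F}_T$. Hence
\begin{equation*}
\bbP(\tau > t \mid \mathcal{F}_T) = \bbP(E > \Gamma_t \mid \mathcal{F}_T) = e^{-\Gamma_t}.
\end{equation*}
The last step freezes $\Gamma_t$ and uses the tail of the unit exponential, $\bbP(E>x)=e^{-x}$; this is the "freezing" lemma for conditional expectations. It gives \eqref{eq6.2.2}.

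Third, the unconditional formula follows from the tower property:
\begin{equation*}
\bbP(\tau>t) = \E_P\big[\bbP(\tau>t \mid \mathcal{F}_T)\big] = \E_P\big[e^{-\Gamma_t}\big].
\end{equation*}
In the constant or deterministic intensity case used later, the expectation is trivial and $\tau$ is exponential, in agreement with the remark preceding the proposition.

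The main obstacle is the direction of the argument. The definition only prescribes the compensator, and the compensator alone does not determine the conditional law given $\mathcal{F}_T$. For that we need the (H)-hypothesis, namely that $\tau$ is conditionally independent of the market, equivalently $\bbP(\tau>t\mid\mathcal{F}_T)=\bbP(\tau>t\mid\mathcal{F}_t)$. I would therefore state explicitly that $\tau$ is the first jump of a Cox process driven by $\gamma^P$, as asserted in the text right before the proposition, and use the canonical construction above as the defining model. With that assumption the remaining steps are routine.
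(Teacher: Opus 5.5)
Your proof is correct and follows essentially the paper's approach. The paper gives no formal proof and instead rests the proposition on the Cox-process assumption stated just before it, where $\tau$ given $\mathcal{F}_T$ is the first jump of an inhomogeneous Poisson process with intensity $\gamma^P$; your construction $\tau=\inf\{t:\Gamma_t\ge E\}$ makes this rigorous, the tower property then gives the unconditional formula, and your remark that the compensator in Definition~\ref{random_t} alone would not suffice without this extra assumption is accurate.
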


From the definition and proposition before, conditionally on the entire information of the financial market $\mathcal{F_T}$, the random time $\tau$ can be regarded as the first jump of an (inhomogeneous) Poisson process with intensity $\gamma^P$, or equivalently, conditioned on $\mathcal{F_T}$, we can view the stopping time $\tau$ as exponentially distributed with stochastic intensity $\gamma^P$.

\subsection{Valuation Formulation under Jump-Diffusion Model} \label{sec6.1.1}

After defining the jump-diffusion model and an independent random time default event, we can now find the valuation formula for derivatives with consideration of jumps and default risks.
As the random time default event is independent of the dynamics of the underlying asset, we will discuss the valuation under the jump-diffusion model first and add the influence of default risks later.
Moreover, we will focus on European options only and they will be used for hedging the standard EPS products.






{\bf A Valuation Formula for European Options} \label{sec6.1.2}

With characteristic function and valuation equation for European options, we can find an explicit solution under the jump-diffusion model according to Merton \cite{Merton1976}.

\begin{theorem} \label{jump_option}{\rm
For a contingent claim - European-style option - $Z^J(S,K,T)$ with strike $K$ and time-to-expiration $T$, the final payoff for the call option is $C_T(S,K,T)=(S_T-K)^+$ and for the put option is $P_T(S,K,T)=(K-S_T)^+$. The exact solution of the European option at time $t$ has the form of an infinite sum of Black-Scholes-like terms:}

\begin{equation} \label{eq6.1.6}
Z_t^J(S,K,T) = \sum_{n=0}^{\infty} \frac{e^{-\lambda (T-t)}(\lambda (T-t))^n}{n!} 
\int F_n (d\zeta) \times Z_t^{BS}(S e^{\zeta} e^{\mu_J (T-t)}, K, r, \sigma, T),    
\end{equation}

{\rm where $F_n$ is the distribution of the sum of $n$ independent jumps, the superscript $J$ stands for Jump-diffusion model and $Z^{BS}(\cdot)$ denotes the standard Black-Scholes formula, for example,}

\begin{equation*}
C_t^{BS}(S,K,r,\sigma,T) = S N(d_+) - e^{-r(T-t)} K N(d_-)
\end{equation*}

{\rm for European call option and}

\begin{equation*}
P_t^{BS}(S,K,r,\sigma,T) = e^{-r(T-t)} K N(-d_-)-S N(-d_+)
\end{equation*}

{\rm for European put options, with} 
\begin{equation*}
d_{\pm} = \frac{1}{\sigma \sqrt{T-t}} \big[\ln \frac{S}{K} + (r \pm \frac{1}{2} \sigma^2) (T-t) \big].   
\end{equation*}

{\rm The pricing formula of European options with jumps can be seen as an adjusted sum of several European options with different numbers $n$ of jumps.}
\end{theorem}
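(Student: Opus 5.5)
The plan is risk-neutral pricing under the measure $\Q$ of Theorem \ref{compensator}, followed by conditioning on the jump component. By \eqref{eq6.1.2}, for $t<T$ we have
\begin{equation*}
S_T = S_t \exp\Big\{ \big(r-\tfrac{\sigma^2}{2}+\mu_J\big)(T-t) + \sigma (W_T-W_t) + (J_T-J_t)\Big\},
\end{equation*}
and under $\Q$ the increments $W_T-W_t$, $N_T-N_t$ and the jump sizes are mutually independent and independent of $\cF_t$. Here I read the model as in Merton's original setting, where the $Y_i$ are log-jump sizes, so that $J_t$ enters the exponent as in \eqref{eq6.1.2}. Let $\zeta=\sum_{i=N_t+1}^{N_T} Y_i$ denote the accumulated log-jump. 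The time-$t$ price is then
\begin{equation*}
Z^J_t = e^{-r(T-t)}\,\EQ\big[h(S_T)\mid \cF_t\big], \qquad h(s)=(s-K)^+ \text{ or } (K-s)^+ .
\end{equation*}

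First, condition on $N_T-N_t=n$, which has Poisson probability $e^{-\lambda(T-t)}(\lambda(T-t))^n/n!$. Given this, $\zeta$ has law $F_n$, the $n$-fold convolution of $F$, and it is independent of the Brownian increment. The tower property, together with this independence (Fubini), gives
\begin{equation*}
Z^J_t=\sum_{n\ge0}\frac{e^{-\lambda(T-t)}(\lambda(T-t))^n}{n!}\int F_n(d\zeta)\; e^{-r(T-t)}\EQ\Big[h\big(\tilde S\, e^{(r-\sigma^2/2)(T-t)+\sigma(W_T-W_t)}\big)\Big],
\end{equation*}
with $\tilde S := S_t e^{\zeta} e^{\mu_J(T-t)}$.

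Second, observe that the inner expectation is exactly the Black--Scholes valuation of $h$ with spot $\tilde S$, rate $r$, volatility $\sigma$ and maturity $T-t$. It equals $Z^{BS}_t(\tilde S,K,r,\sigma,T)$, with the usual $d_\pm$ computed at spot $\tilde S$. This is the standard lognormal computation and requires no new work. Substituting gives \eqref{eq6.1.6}, read with the spot $S_t$ in place of $S$.

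The only point needing care is justifying the interchange of the sum, the integral and the conditional expectation. I will handle it by nonnegativity of the payoffs, using Tonelli's theorem. Finiteness holds for the put since $(K-s)^+\le K$. For the call I would use the bound $(s-K)^+\le s$ together with the martingale property of the discounted price, which gives $\EQ[S_T]=S_te^{r(T-t)}<\infty$. This martingale property is precisely what the compensator $\mu_J$ of Theorem \ref{compensator} guarantees, provided the law of $e^{Y_1}$ has a finite mean, as in Merton's lognormal case.

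I expect the main obstacle to be this last consistency check on the compensator. One must confirm that $\EQ[e^{\zeta}]e^{\mu_J(T-t)}$, averaged over $n$, equals $1$, so that the mixed Black--Scholes terms reproduce a martingale. This needs $\mu_J=-\lambda\,\EQ[e^{Y}-1]$ in the exponential parametrization. I would state this identification explicitly before invoking the formula.
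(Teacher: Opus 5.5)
Your proposal is correct and follows the same route as the paper's proof. You write $S_T$ from Eq.~\ref{eq6.1.2}, drop the conditioning via independence of increments, condition on the Poisson number of jumps, and absorb $e^{\zeta}e^{\mu_J(T-t)}$ into the spot of a Black--Scholes price; your version is more careful than the paper's sketch, which conflates the jump variable with the compensator by writing $-\lambda\zeta$ in the exponent. The compensator consistency check you flag as the main obstacle is not needed for the identity Eq.~\ref{eq6.1.6}, which holds for any constant $\mu_J$ in the exponent (and Tonelli requires no finiteness); it matters only for $\Q$ to be a genuine martingale measure, which the statement takes as given.
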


\begin{proof}
The value of the contingent claim, European-style option, at time $t$ is the discounted value of its expectation under the risk-neutral measure $\Q$:

\begin{align*}
Z_t^J(S,K,T) &= e^{-r(T-t)} \EQ[Z_T(S_T,K,T)| \mathcal{F}_t ] \\
&=e^{-r(T-t)} \EQ[Z_T(S_t e^{(r - \frac{\sigma^2}{2} - \lambda \zeta)(T-t) + \sigma (W_T - W_t) + (J_T-J_t)},K,T)| \mathcal{F}_t ] \\
&=e^{-r(T-t)} \EQ[Z_T(S_t e^{(r - \frac{\sigma^2}{2} - \lambda \zeta)(T-t) + \sigma W_{T-t} + J_{T-t}},K,T)| \mathcal{F}_t ] \\
&=e^{-r(T-t)} \EQ[Z_T(S_t e^{(r - \frac{\sigma^2}{2} - \lambda \zeta)(T-t) + \sigma W_{T-t} + J_{T-t}},K,T) ].
\end{align*}

By conditioning on the number of jumps $n$, we can get

\begin{align*}
Z_t^J(S,K,T) 
&=e^{-r(T-t)} \sum_{n=0}^{\infty} \Q(N_{T-t}=n) \EQ[Z_T(S_t e^{(r - \frac{\sigma^2}{2} - \lambda \zeta)(T-t) + \sigma W_{T-t} + J_{T-t}^n},K,T) ] \\
&=e^{-r(T-t)} \sum_{n=0}^{\infty} \frac{e^{-\lambda (T-t)}(\lambda (T-t))^n}{n!} \EQ[Z_T(S_t e^{(r - \frac{\sigma^2}{2} - \lambda \zeta)(T-t) + \sigma W_{T-t} + J_{T-t}^n},K,T) ],
\end{align*}

where

\begin{equation*}
J_{T-t}^n = \sum_{i=1}^{n} Y_i =\sum_{i=1}^{N_{T-t}=n} Y_i.
\end{equation*}

Notice that $\Q(N_{T-t}=n)$ is the probability of occurring $n$ jumps between $t$ and $T$, which follows a Poisson distribution.
Then, we can use jump compensator $\mu_J$ to represent the jump size inside and the explicit solution can be obtained.
\end{proof}

Moreover, according to Merton \cite{Merton1976}, we can simplify the pricing formula of European call options Eq. \ref{eq6.1.6} for normally distributed jumps as follows.

\begin{proposition} \label{normal}
If jumps are normally distributed with $Y_i \sim N(\alpha, \delta^2)$, the pricing formula for European call option at initial $C_0^n(S,K,T)$ for underlying asset with jumps can be simplified as

\begin{equation} \label{eq6.1.7}
C_0^n(S,K,T) = \sum_{n=0}^{\infty} \frac{e^{-\lambda' T}(\lambda' T)^n}{n!} 
C_0^{BS}(S, K, r_n, \sigma_n, T),    
\end{equation}

where
\begin{align*}
\lambda' &= \lambda e^{\alpha + \frac{\delta^2}{2}} \\
\sigma_n^2 T &= \sigma^2 T + n \delta^2 \\
r_n T &= (r+\mu_J)T + n(\alpha + \frac{\delta^2}{2}) 
= \big(r- \lambda(e^{\alpha + \frac{\delta^2}{2}}-1) \big)T + n(\alpha + \frac{\delta^2}{2}). 
\end{align*}

Each term of European call option $C^{BS}(S, K, r_n, \sigma_n, T)$, which is given by standard Black-Scholes formula, in Eq. \ref{eq6.1.7} is the value of the option conditional on there being exactly $n$ jumps during its life.
\end{proposition}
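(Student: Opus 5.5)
The plan is to start from Theorem \ref{jump_option} at $t=0$, specialised to the case of normally distributed log-jumps. There is one interpretive point to fix first. In Merton's setting the jump contributes $S_{t^-}(e^{Y_i}-1)$ to the price, while the explicit solution (\ref{eq6.1.2}) adds $J_t=\sum Y_i$ in the exponent. So $Y_i$ is the log-jump size. The compensator of Theorem \ref{compensator} must then be read as $\mu_J=-\lambda\,\EQ[e^{Y}-1]=-\lambda(e^{\alpha+\delta^2/2}-1)$. This is exactly the value appearing in the stated formula for $r_n$, and I will use it.

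Conditioning on $N_T=n$, we have $J_T\sim N(n\alpha,n\delta^2)$, independent of $W_T$. The log-price is then
\[
\log S_T=\log S_0+(r+\mu_J-\tfrac{\sigma^2}{2})T+n\alpha+\big(\sigma W_T+J_T-n\alpha\big),
\]
where the bracket is centred Gaussian with variance $\sigma^2T+n\delta^2=\sigma_n^2T$. I will rewrite the drift as $(r_n-\tfrac12\sigma_n^2)T$. This works because
\[
r_nT-\tfrac12\sigma_n^2T=(r+\mu_J)T+n\alpha+\tfrac{n\delta^2}{2}-\tfrac{\sigma^2T}{2}-\tfrac{n\delta^2}{2},
\]
which matches. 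Hence, conditionally on $N_T=n$, $S_T$ is lognormal exactly as in a Black--Scholes model with rate $r_n$ and volatility $\sigma_n$. Therefore
\[
\EQ[(S_T-K)^+\mid N_T=n]=e^{r_nT}C_0^{BS}(S,K,r_n,\sigma_n,T).
\]

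Next I multiply by $e^{-rT}$ and by the Poisson weight $e^{-\lambda T}(\lambda T)^n/n!$, and sum over $n$. The main bookkeeping step is to show that the discount factors recombine into the Poisson weights with the modified intensity $\lambda'$. Write $m=e^{\alpha+\delta^2/2}$, so that $r_nT=rT-\lambda(m-1)T+n\log m$. Then
\[
e^{-rT}e^{r_nT}e^{-\lambda T}\frac{(\lambda T)^n}{n!}=e^{-\lambda mT}\frac{(\lambda m T)^n}{n!}=\frac{e^{-\lambda'T}(\lambda'T)^n}{n!}.
\]
This yields (\ref{eq6.1.7}).

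Finally, interchanging sum and expectation is justified by Tonelli, since all terms are non-negative. The main obstacle is only the notational inconsistency between $Y_i$ as the jump of $S$ in (\ref{eq6.1.1}) and as the log-jump in (\ref{eq6.1.2}). I will resolve it in favour of the log-jump reading, which makes the result a direct computation.
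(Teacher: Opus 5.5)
Your proof is correct and follows the same route as the paper's proof. Both condition on $N_T=n$, observe that the log-price is Gaussian with variance $\sigma_n^2T$ and can be written with drift $(r_n-\tfrac12\sigma_n^2)T$, so the conditional payoff is $e^{r_nT}C_0^{BS}(S,K,r_n,\sigma_n,T)$, and then absorb the factor $e^{(r_n-r)T}$ into the Poisson weights to pass from $\lambda$ to $\lambda'$. Your explicit reading of $Y_i$ as the log-jump size (so that $\mu_J=-\lambda(e^{\alpha+\delta^2/2}-1)$) and the Tonelli justification make precise what the paper uses implicitly when it sets $\zeta=e^{\alpha+\delta^2/2}-1$.
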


\begin{proof}
By assuming jumps are independent and identically normally distributed with $Y_i \sim N(\alpha, \delta^2)$, we have 
\begin{equation*}
\zeta = e^{\alpha+\frac{\delta^2}{2}}-1    
\end{equation*}
and the sum of $n$ independent jumps satisfies normal distribution that $\sum_{i=1}^n Y_i \sim N(n \alpha, n \delta^2)$. Thus

\begin{equation*}
C_T^n(S,K,T) 
=e^{-r(T-t)} \sum_{n=0}^{\infty} \frac{e^{-\lambda T}(\lambda T)^n}{n!} \EQ[C_T^n(S e^{(r - \frac{\sigma^2}{2} - \lambda(e^{\alpha + \frac{\delta^2}{2}}-1))T + \sigma W_T + J_T^n},K,T) ].    
\end{equation*} 
 
Plugging the distribution of the sum of jumps $F_n$ into the valuation formula for the European call option, Eq. \ref{eq6.1.6}, and using the change of variable for integration of probability density function of jumps sum, the term in the exponential with $N_{T-t}=n$ follows

\begin{align*}
&(r - \frac{\sigma^2}{2} - \lambda(e^{\alpha + \frac{\delta^2}{2}}-1))T + \sigma W_{T-t} + J_T^n \\
\sim & N((r - \frac{\sigma^2}{2} - \lambda(e^{\alpha + \frac{\delta^2}{2}}-1))T + n \alpha, \sigma^2 T + n \delta^2)
\end{align*}

and it can be rewritten the volatility as $\sigma_n^2 = \sigma^2 + \frac{n \delta^2}{T}$ so that its distribution stays the same 

\begin{align*}
&(r - \frac{\sigma^2}{2} - \lambda(e^{\alpha + \frac{\delta^2}{2}}-1))T + n \alpha + \sigma_n W_T \\
\sim & N((r - \frac{\sigma^2}{2} - \lambda(e^{\alpha + \frac{\delta^2}{2}}-1))T + n \alpha, \sigma^2 T + n \delta^2).
\end{align*}

By adding and subtracting $\frac{n \delta^2}{2T}$ in the exponential one gets


\begin{align*}
&(r - \frac{\sigma^2}{2} - \lambda(e^{\alpha + \frac{\delta^2}{2}}-1))T + n \alpha + \sigma_n W_T \\
& =\big(r - (\frac{\sigma^2}{2} + \frac{n \delta^2}{2T}) + \frac{n \delta^2}{2T} - \lambda(e^{\alpha + \frac{\delta^2}{2}}-1) \big)T + n \alpha + \sigma_n W_{T-t} \\
& =n (\alpha + \frac{\delta^2}{2}) - \lambda(e^{\alpha + \frac{\delta^2}{2}}-1)T+ (r-\frac{\sigma_n^2}{2})T + \sigma_n W_T .
\end{align*}

Thus we can get the pricing formula for the European call option under normally distributed jumps that

\begin{align*}
C_0^n &= e^{-rT} \sum_{n=0}^{\infty} \frac{e^{-\lambda T}(\lambda T)^n}{n!} \EQ[Z_T(S_t e^{(r_n-\frac{\sigma_n^2}{2})T + \sigma_n B_T},K,T) ] \\
& = \sum_{n=0}^{\infty} e^ {(r_n - r)T} \frac{e^{-\lambda T}(\lambda T)^n}{n!} C_0^{BS}(S,K,r_n,\sigma_n,T).
\end{align*}

Changing $\lambda$ to $\lambda'= \lambda e^{\alpha + \frac{\delta^2}{2}}$, we can put $e^ {(r_n - r)T}$ part to the probability of $n$ jumps part and Eq. \ref{eq6.1.7} can be achieved.
\end{proof}

{\bf Put-Call Parity under Jump-Diffusion Model} \label{sec6.1.4}

We need to mention that the put-call parity under the standard Black-Scholes model is not satisfied anymore. Fortunately, the value of European call options and European put options still have some other relationships. Here we have the following equation representing the new put-call parity,

\begin{theorem} \label{pc_parity} {\rm
Assuming the stock price at time $t$ to be $S_t$, the put-call parity under the jump-diffusion model is now given as follows:}

\begin{align} \label{eq6.1.9}
C^J_t(S,K,T) - P^J_t(S,K,T) &= e^{-r(T-t)} \EQ[(S_T-K)^+ - (K-S_T)^+ |\cF_t] \nonumber \\
&=e^{-r(T-t)} \EQ[S_T-K |\cF_t] \nonumber \\
&= S_t e^{-J_t} \EQ[e^{J_T}|\cF_t] - e^{-r(T-t)}K
\end{align}

{\rm where $C^J_t$ and $P^J_t$ are the values of European call options and European put options under the jump-diffusion model at time $t$. $J_t$ is a compound Poisson process defined in Eq. \ref{eq6.1.1} which represents jumps during the holding period of assets.}
\end{theorem}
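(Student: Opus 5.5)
The plan is to prove the three equalities in \eqref{eq6.1.9} in order, using only the risk-neutral representation \eqref{eq6.1.2} and the pricing rule from the proof of Theorem \ref{jump_option}. That rule says the time-$t$ value of a European claim with terminal payoff $Z_T$ is $e^{-r(T-t)}\EQ[Z_T\,|\,\cF_t]$.

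\textbf{First equality.} Applying the pricing rule to the call payoff $(S_T-K)^+$ and the put payoff $(K-S_T)^+$, and subtracting by linearity of conditional expectation, gives the first line directly.

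\textbf{Second equality.} This follows from the pointwise identity $x^+-(-x)^+=x$ with $x=S_T-K$. Integrability is not an issue: $\EQ[S_T]<\infty$ because $e^{\sigma W_T}$ has all exponential moments, and $\EQ[e^{J_T}]=\exp\{\lambda T\int(e^{\zeta}-1)F(d\zeta)\}$ is finite whenever the jump law has a finite exponential moment. This holds, for instance, in the Merton case $Y_i\sim N(\alpha,\delta^2)$ of Proposition \ref{normal}.

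\textbf{Third equality.} The constant term gives $e^{-r(T-t)}K$ immediately. For the stock term, use \eqref{eq6.1.2} to write
\begin{equation*}
S_T=S_t\exp\Big\{\big(r-\tfrac{\sigma^2}{2}+\mu_J\big)(T-t)+\sigma(W_T-W_t)+(J_T-J_t)\Big\}.
\end{equation*}
Three facts then do the work:
\begin{itemize}
\item $S_t$ and $J_t$ are $\cF_t$-measurable, so they come out of the conditional expectation.
\item $W_T-W_t$ is independent of $\cF_t$ and of the jump part, so $\EQ[e^{\sigma(W_T-W_t)}]=e^{\sigma^2(T-t)/2}$. This cancels the $-\tfrac{\sigma^2}{2}(T-t)$ in the drift.
\item The factor $e^{r(T-t)}$ cancels against the discount factor.
\end{itemize}
What remains is
\begin{equation*}
e^{-r(T-t)}\EQ[S_T\,|\,\cF_t]=S_t\,e^{\mu_J(T-t)}\,e^{-J_t}\,\EQ\big[e^{J_T}\,\big|\,\cF_t\big],
\end{equation*}
using $e^{J_T-J_t}=e^{-J_t}e^{J_T}$ with $e^{-J_t}$ measurable.

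\textbf{Main obstacle: the compensator factor.} The expression just obtained differs from the third line of \eqref{eq6.1.9} by $e^{\mu_J(T-t)}$. I would reconcile the two as follows.
\begin{itemize}
\item Read $J$ in the last line as the compensated jump process $J_s+\mu_J s$, which is how the drift in \eqref{eq6.1.2} is built. With that reading the factor $e^{\mu_J(T-t)}$ is absorbed and the third line holds exactly as written.
\item As a consistency check, evaluate the conditional expectation via the compound-Poisson moment generating function: $\EQ[e^{J_T-J_t}]=\exp\{\lambda(T-t)\int(e^{\zeta}-1)F(d\zeta)\}$.
\item The compensator should make the discounted stock a $\Q$-martingale, which reduces the right-hand side to $S_t-e^{-r(T-t)}K$, the Black--Scholes form of put-call parity. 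This requires $\mu_J=-\lambda\int(e^{\zeta}-1)F(d\zeta)$, the form used in Proposition \ref{normal}, rather than the literal $-\lambda\int\zeta F(d\zeta)$ of Theorem \ref{compensator}.
\end{itemize}
I would state this identification explicitly so that the displayed formula is read with the correct convention.
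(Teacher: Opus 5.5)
Your first two steps match the paper, which gives no argument beyond the displayed chain itself. Your evaluation of the third step is also correct:
\[
e^{-r(T-t)}\EQ[S_T\,|\,\cF_t]=S_t\,e^{\mu_J(T-t)}\,e^{-J_t}\,\EQ\big[e^{J_T}\,\big|\,\cF_t\big].
\]
The factor $e^{\mu_J(T-t)}$ is not an obstacle in your proof; it is an error in the statement, since the paper's last equality simply drops it. Re-reading $J$ in the last line as $J_s+\mu_J s$ makes the formula true only by changing its meaning. The theorem explicitly says $J$ is the uncompensated compound Poisson process of Eq.~\ref{eq6.1.1}, and with that $J$ the third line is false whenever $\mu_J\neq 0$. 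That is the case for every parameter set in the paper's tables. So do not present your reading as a proof of the display as written. State plainly that the display is wrong and that you are proving a corrected identity.

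That corrected identity is the one in your last bullet, and it should be your main conclusion. Since $J_T-J_t$ is independent of $\cF_t$, we have $e^{-J_t}\EQ[e^{J_T}\,|\,\cF_t]=\exp\{\lambda(T-t)\int(e^{\zeta}-1)F(d\zeta)\}$. With the martingale compensator $\mu_J=-\lambda\int(e^{\zeta}-1)F(d\zeta)$, which is the one Proposition~\ref{normal} actually uses, the right-hand side collapses to $S_t-e^{-r(T-t)}K$.

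This is ordinary put--call parity. A call minus a put is statically replicated by a forward, so the identity holds in every arbitrage-free model. The paper's claim that Black--Scholes parity is ``not satisfied anymore'' under jumps is therefore mistaken. Any right-hand side other than $S_t-e^{-r(T-t)}K$ only signals that $\Q$ is not a martingale measure for the discounted stock. An example is the one produced by the literal compensator $-\lambda\int\zeta F(d\zeta)$ of Theorem~\ref{compensator}.

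Keep your integrability remark requiring a finite exponential moment for the jump law. The paper needs this hypothesis but never states it.
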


\subsection{Valuation Formula with Default Risks} \label{sec6.2.1}

Lastly, we want to find the value process under the random time default model. By using the valuation formula, we can add the influence of the random time default event to the financial products under any model, such as the most general Black-Scholes model.
And of course, we can consider such default risks under the jump-diffusion model directly and it is independent of the valuation formula under the jump-diffusion model. 
As we already discussed before, we can view the random default time $\tau$, which is the first jump time of a Cox process with intensity $\gamma^P$, as a discount rate similar to the interest rate because the default time $\tau$ is independent of the financial market $(B,S,\mathbf{F})$. Then we have the following lemma according to Szimayer \cite{Szi2005}.

Unlike in the standard Black-Scholes model, here we describe any probability measure $\Q$ equivalent to $\bbP$ by its density process $L^{\psi,\phi}$. Apply Girsanov's Theorem, we have

\begin{equation} \label{rt_L}
dL_t^{\psi,\phi} =  L_{t^-}^{\psi,\phi} (\psi_t\,dW_t + (\phi_t-1)\,dM_t),
\quad \text{for} \quad 0 \leq t \leq T,
\end{equation}

and $L_0^{\psi,\phi}=1$, where $\psi$ and $\phi$ are predictable processes and $\phi$ is strictly positive with

\begin{equation*}
\int_0^T \psi_t^2\,dt < \infty, \quad
\int_0^T \phi_t \gamma_t^P\,dt < \infty \quad a.s.
\end{equation*}

Then the processes $\widetilde{W}$ and $\widetilde{M}$ given by

\begin{equation*}
\widetilde{W}_t = W_t - \int_0^t \psi_s\,ds, \quad \text{and} \quad
\widetilde{M}_t = D_t - \int_0^{t \wedge \tau} \phi_s \gamma_s^P\,ds, 
\quad \text{for} \quad 0 \leq t \leq T,
\end{equation*}

are $\Q$-martingales. Moreover, $\widetilde{W}$ is a Brownian Motion under probability measure $\Q$ and $\gamma^Q = \gamma^P (1+\phi)$ is the $\Q$-intensity of $N$.

Then the probabilistic structure of the market is briefly characterised by the following theorem:

\begin{theorem} \label{rt_Q} {\rm
Let $\Q$ be an equivalent martingale measure to real market probability measure $\bbP$, then $\Q$ is given in terms of $(\psi,\phi)$ by $d\Q = L_T^{\psi,\phi}\,d\bbP$, where $L^{\psi,\phi}$ is defined by Eq. \ref{rt_L}, and}

\begin{equation}
\psi_t = -\frac{\mu-r}{\sigma}, 
\quad \text{for} \quad 0 \leq t \leq T.   
\end{equation}

{\rm Here the market price of risk of the stock is $-\psi_t = \frac{\mu-r}{\sigma}$ is a unique value since the stock is traded and the market model $(B,S,\mathbf{F})$ is complete. However, the setting for the event risk is incomplete. The event risk is actually not reflected in any traded asset and thus, the market price of risk of the event risk $\gamma^P \phi$ is arbitrary.}
\end{theorem}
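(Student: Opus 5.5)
To prove Theorem \ref{rt_Q} I will characterise all equivalent martingale measures through their density processes, and then impose the requirement that the discounted stock price $\widetilde S_t = e^{-rt}S_t$ be a $\Q$-martingale. The underlying filtration $\mathbf{F}$ is generated by the Brownian motion $W$ and the default indicator $D$. Every strictly positive $\bbP$-martingale $L$ with $L_0=1$ therefore admits a martingale representation with respect to the pair $(W,M)$, where $M$ is the compensated default martingale of Eq. \ref{eq6.2.1}.

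\textbf{Step 1: form of the density process.} I would write $dL_t = L_{t^-}(\psi_t\,dW_t + (\phi_t-1)\,dM_t)$ with predictable integrands. The factor $L_{t^-}$ can be pulled out because $L$ is strictly positive. The integrability conditions stated before the theorem ensure that the stochastic exponential is well defined and strictly positive, which requires $\phi>0$. Under these conditions $d\Q = L_T\,d\bbP$ defines an equivalent measure. This step rests on the predictable representation property for the enlarged filtration. I take that property as given in the setting of Szimayer \cite{Szi2005}, since $\tau$ is independent of $W$ and has an intensity.

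\textbf{Step 2: Girsanov.} By Girsanov's theorem for continuous martingales, $\widetilde W_t = W_t - \int_0^t \psi_s\,ds$ is a $\Q$-Brownian motion. For the jump part, the $\Q$-compensator of $D$ is obtained by multiplying the $\bbP$-intensity by the jump ratio $L_t/L_{t^-} = \phi_t$ at $\tau$. This gives the $\Q$-martingale $\widetilde M$ and the $\Q$-intensity of default.

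\textbf{Step 3: martingale condition for the stock.} Between defaults the stock has Black--Scholes dynamics $dS_t = \mu S_t\,dt + \sigma S_t\,dW_t$ and carries no jump from $D$. Substituting $dW_t = d\widetilde W_t + \psi_t\,dt$ gives
\[
d\widetilde S_t = \widetilde S_t\big((\mu - r + \sigma\psi_t)\,dt + \sigma\,d\widetilde W_t\big).
\]
For $\widetilde S$ to be a (local) martingale, the drift must vanish $dt\otimes d\Q$-a.e., which forces $\psi_t = -(\mu-r)/\sigma$. With this choice $\widetilde S$ is a stochastic exponential of a Brownian motion with constant volatility, so Novikov's condition makes it a true martingale. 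The stock dynamics put no restriction on $\phi$, so the default part of the market price of risk, $\gamma^P\phi$, remains arbitrary. This proves both the uniqueness of $\psi$ and the incompleteness with respect to event risk.

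\textbf{Main obstacle.} The delicate point is Step 1: justifying that every equivalent measure arises in this form, i.e.\ the martingale representation in the progressively enlarged filtration. I would handle it by citing the standard result for a Brownian filtration enlarged by an independent random time with intensity. Everything after that is a routine Girsanov computation.
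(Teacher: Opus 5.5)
Your argument is correct and follows essentially the route the paper takes. The paper gives no separate proof: it introduces the density representation Eq. \ref{rt_L} and the Girsanov transforms $\widetilde W$, $\widetilde M$ just before the theorem, deferring the representation property to Szimayer, and your zero-drift condition on $e^{-rt}S_t$ is the step that pins down $\psi$ while leaving $\phi$ free. One caveat: with jump factor $\phi$ as in Eq. \ref{rt_L}, your Step 2 yields $\Q$-intensity $\phi\gamma^P$, which matches the paper's $\widetilde M$ but not its stated $\gamma^Q=\gamma^P(1+\phi)$ (that belongs to the convention $dL=L_{-}(\psi\,dW+\phi\,dM)$); so the event-risk premium is then $\gamma^P(\phi-1)$ rather than $\gamma^P\phi$, which does not affect the conclusion.
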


With the definition of Radon-Nikod{\'y}m derivative by Theorem \ref{rt_Q}, we can now give the standard result for a European-style contingent claim in a random time default model. This type of valuation formula is well known from the pricing of credit risk derivatives.

\begin{lemma} \label{lemma6.2.1}
Let $X_T \I_{\{\tau > T\}}$ be a non-negative $\mathcal{F}_T$-measurable random variable representing the payoff of a European option maturing at time $T$ with $\E_P(X_T) < \infty$. We have the value process $V^{Q}$ of such contingent claims on $\{\tau > t\}$ under a given equivalent martingale measure $\Q$

\begin{equation} \label{eq6.2.3}
V_t^{Q} := B_t \E_{Q} \Big\{ \frac{X_T \I_{\{\tau > T\}}}{B_T} \Big | \mathcal{G}_t \Big\}
= B_t \E_{Q} \Big\{ \frac{X_T}{B_T} \Gamma_{t,T}^{\gamma} \Big | \mathcal{F}_t \Big\}    
\end{equation}

where the final payoff of the contingent claim could be $X_T=(S_T-K)^+$ for a European call option under the random time default model and $\Gamma^{\gamma}$ is the conditional survival probability of $\tau$ with respect to the measure $\Q$:

\begin{equation} \label{eq6.2.4}
\Gamma_{t,u}^{\gamma} = Q(\tau>u | \tau>t, \mathcal{F}_T) = \exp{\Big( -\int_t^u \gamma^Q_s ds \Big)}, \quad \text{for} \quad 0 \leq t \leq u \leq T,    
\end{equation}

here $\gamma^Q$ is the $\Q$-intensity of $\tau$ where $\gamma^Q = \gamma^P (1+\phi)$, and $\phi>0$ is strictly positive determined by $\Q$.
\end{lemma}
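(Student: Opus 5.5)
The plan is to use the standard ``key lemma'' of reduced-form credit modelling in the enlarged filtration $\mathcal{G}_t = \mathcal{F}_t \vee \sigma(D_s : s\le t)$. We first remove the default indicator from the $\mathcal{G}_t$-conditional expectation. Then we condition on $\mathcal{F}_T$ to replace $\I_{\{\tau>T\}}$ by the conditional survival probability from Proposition~\ref{prop6.2.1}, rewritten under $\Q$.

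\textbf{Step 1 (key lemma).} For any integrable $\mathcal{G}_T$-measurable $Y$, on the event $\{\tau>t\}$ we have
\begin{equation*}
\EQ\big[\I_{\{\tau>T\}} Y \,\big|\, \mathcal{G}_t\big] = \I_{\{\tau>t\}}\, \frac{\EQ\big[\I_{\{\tau>T\}} Y \,\big|\, \mathcal{F}_t\big]}{\Q(\tau>t\,|\,\mathcal{F}_t)} .
\end{equation*}
We prove this in the usual way. Every $\mathcal{G}_t$-measurable random variable coincides on $\{\tau>t\}$ with an $\mathcal{F}_t$-measurable one. We then check the defining property of conditional expectation on generators of the form $A\cap\{\tau>t\}$ with $A\in\mathcal{F}_t$.

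\textbf{Step 2 (conditioning on $\mathcal{F}_T$).} We apply this with $Y = X_T/B_T$, which is $\mathcal{F}_T$-measurable. The numerator becomes
\begin{equation*}
\EQ\big[ (X_T/B_T)\, \Q(\tau>T\,|\,\mathcal{F}_T) \,\big|\, \mathcal{F}_t\big].
\end{equation*}
The survival probabilities come from the analogue of Proposition~\ref{prop6.2.1} under $\Q$:
\begin{equation*}
\Q(\tau>u\,|\,\mathcal{F}_T)=\exp\Big(-\int_0^u\gamma^Q_s\,ds\Big).
\end{equation*}
In particular $\Q(\tau>t\,|\,\mathcal{F}_t)=\Q(\tau>t\,|\,\mathcal{F}_T)$, because this quantity is $\mathcal{F}_t$-measurable (the intensity is predictable). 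Dividing the two exponentials gives
\begin{equation*}
\Gamma^{\gamma}_{t,T} = \exp\Big(-\int_t^T \gamma^Q_s\,ds\Big).
\end{equation*}
Since $\Gamma^\gamma_{t,T}$ is $\mathcal{F}_T$-measurable, it can be moved inside the $\mathcal{F}_t$-expectation. Multiplying by $B_t$ then yields \eqref{eq6.2.3}. Integrability is clear: $0\le \Gamma\le 1$ and $\E_P(X_T)<\infty$, and we need $X_T$ to be $\Q$-integrable, which we assume or take from the choice of $\Q$.

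\textbf{Main obstacle: the $\Q$-intensity.} The hard part is justifying that under $\Q$ (defined via $L^{\psi,\phi}$ of Theorem~\ref{rt_Q}) the time $\tau$ still has the Cox structure with intensity $\gamma^Q$. This requires that the $\mathcal{F}_T$-conditional law is preserved, i.e.\ that the immersion (H)-hypothesis survives the change of measure.

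I would first use the Girsanov statement recalled above: $\widetilde{M}$ is a $\Q$-martingale with compensator $\int_0^{t\wedge\tau}\gamma^Q_s ds$. Then, to avoid subtleties, I would assume $\phi$ and $\gamma^P$ are $\mathbf{F}$-predictable, deterministic in the constant-intensity case. With that assumption, $\tau$ remains independent of $\mathcal{F}_T$ under $\Q$ with exponential law. This identifies $\Q(\tau>u\,|\,\mathcal{F}_T)$ as stated, which is \eqref{eq6.2.4}.
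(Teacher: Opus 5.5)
The paper gives no proof of this lemma; it defers to Szimayer and to ``the pricing of credit risk derivatives''. Your Steps 1 and 2 are exactly the standard argument behind that citation, and they are sound: the key lemma in $\mathcal{G}_t=\mathcal{F}_t\vee\sigma(D_s:s\le t)$, the tower property through $\mathcal{F}_T$, and division by the $\mathcal{F}_t$-measurable quantity $\Q(\tau>t\,|\,\mathcal{F}_t)$.

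The gap is the step you yourself flag. The lemma is stated for a general predictable intensity, yet you only treat deterministic $\phi$ and $\gamma^P$. Your assertion that ``$\tau$ remains independent of $\mathcal{F}_T$ under $\Q$'' is false as soon as $\phi$ or $\gamma^P$ is a genuinely stochastic $(\mathcal{F}_t)$-predictable process. What is needed is the Cox property $\Q(\tau>u\,|\,\mathcal{F}_T)=\exp(-\int_0^u\gamma^Q_s\,ds)$. The Girsanov fact that $\widetilde{M}$ is a $\Q$-martingale does not deliver this: it identifies the compensator of $D$ in the large filtration, not the $\mathcal{F}_T$-conditional law of $\tau$. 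Even in the deterministic case, independence needs a reason, namely that $L^{\psi,\phi}_T$ factorises into an $\mathcal{F}_T$-measurable part and a $\sigma(\tau)$-measurable part.

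A direct computation closes the gap. Since $[W,M]=0$, we have $L^{\psi,\phi}_T=\mathcal{E}(\int\psi\,dW)_T\,\mathcal{E}(\int(\phi-1)\,dM)_T$, with $\mathcal{E}$ the stochastic exponential. The second factor equals $\exp(-\int_0^T(\phi_s-1)\gamma^P_s\,ds)$ on $\{\tau>T\}$ and $\phi_\tau\exp(-\int_0^\tau(\phi_s-1)\gamma^P_s\,ds)$ on $\{\tau\le T\}$. Take $\phi$ to be $(\mathcal{F}_t)$-predictable; this is possible on $[0,\tau]$, which is all that enters. Integrating against the $\bbP$-conditional law of $\tau$ given $\mathcal{F}_T$ from Proposition~\ref{prop6.2.1} gives, for $u\le T$,
\[
\E_P\Big[\mathcal{E}\Big(\int(\phi-1)\,dM\Big)_T\,\I_{\{\tau>u\}}\,\Big|\,\mathcal{F}_T\Big]
=e^{-\int_0^T\phi_s\gamma^P_s\,ds}+\int_u^T\phi_v\gamma^P_v\,e^{-\int_0^v\phi_s\gamma^P_s\,ds}\,dv
=e^{-\int_0^u\phi_s\gamma^P_s\,ds}.
\]
Setting $u=0$ shows that $\mathcal{E}(\int\psi\,dW)_T$ is the density of $\Q$ on $\mathcal{F}_T$; it is $\mathcal{F}_T$-measurable because $\psi$ is constant. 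Hence $\Q(A\cap\{\tau>u\})=\EQ[\I_A\,e^{-\int_0^u\phi_s\gamma^P_s\,ds}]$ for all $A\in\mathcal{F}_T$, and dividing by the same expression at time $t$ gives \eqref{eq6.2.4}.

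This yields $\gamma^Q=\phi\gamma^P$, matching the paper's own $\widetilde{M}$. It does not give $\gamma^P(1+\phi)$, which belongs to the parametrisation $dL=L_{-}(\psi\,dW+\phi\,dM)$ with $\phi>-1$. Your argument tacitly uses $\phi\gamma^P$, so you should flag this inconsistency in the statement rather than pass over it.

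Two minor points. First, the division in Step 2 is justified by the $\mathcal{F}_t$-measurability of $e^{-\int_0^t\gamma^Q_s\,ds}$, not by the $\mathcal{F}_T$-measurability of $\Gamma^\gamma_{t,T}$. Second, since $X_T\ge 0$, every conditional expectation exists in $[0,\infty]$. So $\Q$-integrability matters only for finiteness of $V^Q$; you are right that it does not follow from $\E_P(X_T)<\infty$.
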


Consider a European call option on $S$ with a strike price $K$, the event risk is the bankruptcy of the option seller. 
Thus if we want to simplify the valuation formula for European options, we can set both interest rate $r$ and intensity $\gamma^P$ of the Poisson process which corresponds to random time $\tau$ to be constants. Then by using martingale measure $\Q$, we have the value process of a European call option under random time default model with final payoff $C=(S_T-K)^+\I_{\{\tau > T\}}$ at maturity $T$ as follows

\begin{align*}
V_t^Q (C) &= \E_Q \Big\{ \exp{\Big( -\int_t^T (r+\gamma^Q)\,du \Big)} 
(S_T-K)^+ \Big | \mathcal{F}_t \Big\} \\
&= e^{-(r+\gamma^Q)(T-t)} \E_Q \Big\{(S_T-K)^+ \Big | \mathcal{F}_t \Big\} \\
&= e^{-\gamma^Q(T-t)} C_t(S,K,T)
\end{align*}

where $C(S,K,T)$ is the value function of European call options, no matter whether with consideration of jumps or not.
Similarly, for European put option with final payoff $P=(K-S_T)^+ \I_{\{\tau > T\}}$ at maturity $T$, the value process under random time default model equals to

\begin{equation*}
V_t^Q (P) = e^{-\gamma^Q(T-t)} P_t(S,K,T)    
\end{equation*}

where $P(S,K,T)$ is the value function of European put options.

\section{Hedging Strategies under Jump-Diffusion Model and Default Risks} \label{sec6.4+}

Based on the general pricing formula for the adjusted return of a standard EPS proposed by Xu et al. \cite{Xu2024}, the hedging strategy for a general standard EPS without jump and default risks can be derived naturally. Following Xu et al. \cite{Xu2024}, a generic EPS can be decomposed into two components, namely the protection leg and the fee leg, using the ideas in Lemma \ref{lem0.2.1}. These two components can then be hedged separately, leading to the following hedging result for a standard EPS with a virtually arbitrary structure of protection and fee legs.

\begin{proposition}\label{prop5.2.1}
A static hedging portfolio $H$ composed of call and put options for per unit nominal principal of standard EPS has the following payoff at time $T$
\begin{equation} \label{eq0.7.5x}
H(T)=\sum_{i=0}^{n}\frac{p_{i+1}-p_i}{S_0}\,\text{Put}_T(K^l_i,T) - \sum_{j=0}^{m}\frac{f_{j+1}-f_j}{S_0}\,\text{Call}_T(K^g_j,T)
\end{equation}
where $K^l_i=S_0(1+l_i)$ and $K^g_j=S_0(1+g_j)$ for every $i=0,1,\dots, n$ and $j=0,1,\dots ,m$.
\end{proposition}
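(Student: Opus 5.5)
The plan is to show that the terminal payoff of $H$ exactly offsets the provider's cash flow, i.e.\ $H(T) = -\pp(R_T)$. I will treat the two legs of Lemma \ref{lem0.2.1} separately and write each as a sum of hockey-stick functions of the return. Throughout I use the conventions $p_0=f_0=0$ and $l_0=g_0=0$. The first step is to rewrite the option payoffs in terms of $R_T$. Since $S_T = S_0(1+R_T)$ and $K^l_i = S_0(1+l_i)$, we get $\text{Put}_T(K^l_i,T) = (K^l_i - S_T)^+ = S_0 (l_i - R_T)^+$. Likewise $\text{Call}_T(K^g_j,T) = S_0 (R_T - g_j)^+$. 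The factors $1/S_0$ in (\ref{eq0.7.5x}) then cancel, and the claim reduces to the purely deterministic identity
\begin{equation*}
\pp(R) = -\sum_{i=0}^{n} (p_{i+1}-p_i)(l_i - R)^+ + \sum_{j=0}^{m} (f_{j+1}-f_j)(R-g_j)^+ , \qquad R > -1 .
\end{equation*}

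For the protection leg, let $\Phi^p(R) := -\sum_{i}(p_{i+1}-p_i)(l_i-R)^+$. I would verify $\Phi^p = \pp^p$ on $(-\infty,0]$ using three facts.
\begin{itemize}
\item Both sides vanish at $R=0$: every $l_i \le 0$, so each $(l_i-0)^+=0$.
\item Both sides are continuous and piecewise linear with kinks only at the points $l_i$.
\item They have the same slopes. On the interval $(l_{i+1},l_i)$ the active terms are exactly those with $k\le i$, so the slope of $\Phi^p$ is the telescoping sum $\sum_{k=0}^{i}(p_{k+1}-p_k) = p_{i+1}$. This matches the slope of $\pp^p$ read off from Lemma \ref{lem0.2.1}.
\end{itemize}
A continuous piecewise linear function is determined by its value at one point and its slopes, so the two functions agree. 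For $R\ge 0$ all put terms vanish, consistent with $\pp^p(R)=0$ there. The closed-form expression for $\pp^p$ in Lemma \ref{lem0.2.1}, namely $\sum_{k<i} p_{k+1}\Delta l_k + p_{i+1}(R-l_i)$, can also be recovered by summing the telescoped terms directly. This gives a cross-check.

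The fee leg is handled symmetrically. Let $\Phi^f(R) := \sum_j (f_{j+1}-f_j)(R-g_j)^+$. It vanishes for $R\le 0$, and on $(g_j,g_{j+1})$ its slope telescopes to $f_{j+1}$. Hence $\Phi^f = \pp^f$. Adding the two legs and using $\pp = \pp^p + \pp^f$ from Lemma \ref{lem0.2.1} gives $H(T) = -\pp(R_T)$. Thus holding $H$ statically replicates, with the opposite sign, the provider's EPS cash flow per unit nominal. The argument is pathwise, so it is model-free and applies under both the jump-diffusion dynamics and the default model.

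The main obstacle is bookkeeping rather than substance. I need to fix the index conventions so the telescoping sums start correctly ($p_0=f_0=0$) and so the extreme intervals are covered. These are the last protection interval $(-1, l_n)$, or $(-\infty,l_n)$, with slope $p_{n+1}$, and the last fee interval $(g_m,\infty)$ with slope $f_{m+1}$. As a sanity check I would confirm the formula against the buffer, floor and floor-cap examples of Definitions \ref{BEPS}--\ref{FCEPS}. In each case the formula reproduces the displayed decompositions into puts and calls.
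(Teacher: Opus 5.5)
Your proof is correct and uses the same leg-by-leg argument the paper relies on. The paper gives no detailed proof: it only cites Xu et al.\ and splits $\pp$ into the protection and fee legs of Lemma~\ref{lem0.2.1}, each replicated separately by puts and calls. Your rewriting $\text{Put}_T(K^l_i,T)=S_0(l_i-R_T)^+$, $\text{Call}_T(K^g_j,T)=S_0(R_T-g_j)^+$, together with the telescoping-slope check under $p_0=f_0=0$, supplies exactly the verification $H(T)=-\pp(R_T)$ that makes the hedged cash flow vanish with $c=H(0)$.

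One caveat concerns your closing remark. The identity does not depend on the dynamics of $S$, so it covers jumps, as in Section~\ref{sec6.4.1}. However, it concerns the promised option payoffs, so it does not carry over to counterparty default: there the long puts pay nothing and the residual loss of Proposition~\ref{prop_td} remains.
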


Given the hedging portfolio $H$, let us assume that the provider of swap receives contract premium $c$ per one unit of the nominal value, which can be seen as the hedging cost for the provider and is applied to the nominal principal $N_p$ at the beginning of the contract. According to our convention, the provider receives the cash flow $N_p \pp (R_T)$ from the buyer at the maturity $T$ of the swap and the continuously compounded annualized risk-free rate is $r$ during the lifetime $[0,T]$ of the EPS. Then we have the definition of EPS provider's cash flow and fair initial premiums of the standard EPS, which have been defined by Xu et al. \cite{Xu2024}. 

\begin{definition} \label{EPS CF} {\rm
For a standard EPS introduced by Xu et al. \cite{Xu2024}, the provider's \textit{hedged cash flow} for one unit of the nominal principal $N_p$, assessed at maturity $T$ and denoted by $CF_T(c,H)$, equals
\begin{equation} \label{eq0.7.1}
CF_T(c,H)=(c-H(0))e^{rT}+H(T)+\pp (R_T)
\end{equation}
where $\pp (R_T)$ is given in Lemma \ref{lem0.2.1}. The hedging strategy presented in Proposition \ref{prop5.2.1} is \textit{a static hedging strategy} that we can find a premium $\widehat{c}\in \mathbb{R}$ such the equality $CF_T(\widehat{c},H)=0$ holds almost surely. And $\widehat{c}$ is called the \textit{fair premium} for a standard EPS per one unit of the nominal principal.}
\end{definition}

With the static hedging strategy for the standard EPS products, we now want to find the hedging strategies under consideration of jumps and default risks. We will start with jumps only, without default risks.

\subsection{Hedging Strategies without Default Risk} \label{sec6.4.1}

We first consider hedging strategies without default risk, under the consideration of jumps. Actually, we can see that if there is a jump during the lifetime of the underlying asset, both values of reference portfolios and European options will be changed. With the possibility of a jump, the values of call options decrease and the values of put options increase. 
Under an EPS, we consider the final return of the reference portfolio that should be hedged by the swap provider, thus the swap provider can still use the same hedging strategies discussed before to cover the return of the reference portfolio even when there are jumps. And the changeable values of options can represent the influence of jumps in the hedging strategies.

We can see that the hedging strategies which are displayed by Proposition \ref{prop5.2.1} can still fully hedge standard EPS with jumps. However, we need to mention that there is an important assumption here that no default risk in the market. In the meanwhile, we should now consider the pricing of European options with possible jumps, which will be discussed in Section \ref{sec6.3} as part of numerical studies for hedging costs. If we assume there is no default risk, then the assumptions of exactly one jump or no more than one jump will not impact the structure of hedging strategies. No matter there is a jump or not, the European options can cover the outflows and inflows of a standard EPS, but the values of European options will be different under different models and it will influence the fair initial premiums of EPS products.
We will do numerical studies later.

\subsection{Only Counter Party with Default Risk} \label{sec6.4.3}

Now we will discuss the influence of default risks on the hedging strategies of standard EPS. Firstly, it is important to note that EPS becomes useless if we assume all financial institutions (including the EPS provider and its counter party) will default when a jump occurs. In this case, the swap buyers (variable annuity holders) cannot get any payoff from EPS providers and their variable annuity loss according to jump cannot be covered by this protection (standard EPS), if a jump happens. The main reason that investors buy these products is to protect themselves from large losses of variable annuity which cannot be reached if all financial institutions will default when a financial crisis happens, it makes these products meaningless.

In this case, we want to have a look at a more important situation, only the counter-party (put option sellers, excluding EPS providers) has a possibility to default. 
When we assume only the counter-party will possibly default and EPS providers will not default, it means that the EPS will not be affected by jumps and EPS buyers can still get the cash flows because of the price slide of the reference portfolio. Thus a situation in that EPS providers will not default is better for investors, and we want to see the potential loss of EPS providers when a jump occurs.

Before we evaluate the influence of jumps on hedging strategies, we want to discuss their impact on options. With a negative jump, call option buyers and sellers will not default as the final payoff is zero almost surely. Only for put options, the option sellers may default when a jump occurs, because negative jumps in assets’ prices cause large debt payoffs that they cannot afford. Thus we can conclude that call options will not be influenced by jumps and the potential gains can be fully hedged by hedging strategies introduced before. So we only need to discuss the hedging strategies with put options.

We still start with the hedging strategies structured in normal cases without jumps, which has been discussed by Xu et al. \cite{Xu2024}. From the hedging strategies built in Proposition \ref{prop5.2.1}, there are both long positions and short positions of put options for EPS providers to hedge the cash flows of EPS. 
If jumps happen, the EPS provider cannot get any payoff from the long positions of put options because the counter-party (put option sellers) defaults. In the meanwhile, the EPS providers will still pay the debt obligations that come from their short positions of put options. Moreover, they will comply with the rules of EPS that make cash outflows to EPS buyers for price decreases in the reference portfolio and gain cash inflows from EPS buyers for price increases in the reference portfolio. The EPS providers will face a huge debt obligation when there is a negative jump.

Now we want to see some typical examples of swap providers' cash flow after hedge when a financial crisis happens, buffer EPS and floor EPS. By assuming only the counter-party (European option sellers) has the possibility of default and the EPS provider will not default, there is no influence on the EPS contract between EPS buyers and EPS providers. We only need to find the EPS providers' final cash flows when there is a jump in the holding period of the EPS products.  

\begin{example} \label{ex6.4.1}
{\rm Consider a standard buffer EPS introduced in Section \ref{sec6.0}, we use a static hedging strategy to hedge the potential gain or loss from this EPS, which has the hedging cost as follows,}

\begin{equation*}
H(0) =\textcolor{blue}{\frac{p_2}{S_0}\,\text{Put}_0(K^l_1,T)}
- \frac{f_2}{S_0}\,\text{Call}_0(K^g_1,T).   
\end{equation*}

{\rm If a jump happens, then the put option $\text{Put}(K^l_1,T)$ (the blue one), that the EPS provider bought in order to hedge their potential loss, worth nothing because of the third party default. Fortunately, the call options, no matter the EPS providers' long or short will not be defaulted because they will not be implemented. Thus the static hedging strategy before has the final payoff }

\begin{equation*}
H^D(T) = - \frac{f_2}{S_0}\,\text{Call}_T(K^g_1,T)   
\end{equation*}

{\rm when there is a financial crisis. Assuming a jump happens, the EPS provider's cash flow at maturity $T$ after hedge now can be represented as}

\begin{align} \label{eq6.4.1}
CF_T^D(c,H)&=(c-H(0))e^{rT}+H^D(T)+ \pp_B (R_T)\\
&=(c-H(0))e^{rT}+H^D(T) - p_2 (l_1-R_T)^+ + f_2 (R_T-g_1)^+ \nonumber \\
&=(c-H(0))e^{rT} - p_2 (l_1-R_T)^+ \nonumber
\end{align}

{\rm where $CF^D$ and $H^D$ represent the cash flows and hedging portfolios of EPS products in which the counter party defaults at the maturity respectively.} 
\end{example}

\begin{example} \label{ex6.4.2}
{\rm Consider another example of standard EPS, floor EPS, introduced in Section \ref{sec6.0}, EPS providers have the following static hedging portfolio,}

\begin{equation*}
H(0) = \textcolor{red}{-\frac{p_1}{S_0}\,\text{Put}_0(K^l_1,T)}
\textcolor{blue}{+ \frac{p_1}{S_0}\,\text{Put}_0(S_0,T)}
- \frac{f_2}{S_0}\,\text{Call}_0(K^g_1,T) 
\end{equation*}

{\rm The call options do not have any default probability that we only need to focus on put options. 
For the first short positions of the put option (the red one), $\text{Put}(K^l_1,T)$, we assume the EPS providers will not default and they will still pay the debt obligation even when the financial crisis comes. 
However, for the second long positions of put options (the blue one), $\text{Put}(S_0,T)$, the EPS provider will not get any payoff when a jump occurs and third-party default. In this case, we have the value of hedging portfolio at maturity $T$ when the jump happens as follows,}

\begin{equation*}
H^D(T) = \textcolor{red}{- \frac{p_1}{S_0} \, \text{Put}_T(K^l_1,T)}
- \frac{f_2}{S_0} \, \text{Call}_T(K^g_1,T).
\end{equation*}

{\rm And the swap provider's final cash flow at time $T$ after hedge has the following representation, assuming jump occurs,}

\begin{align} \label{eq6.4.2}
CF_T^D(c,H) &=(c-H(0))e^{rT}+H^D(T)+\pp_F (R_T) \\
&= (c-H(0))e^{rT}+H^D(T)-p_1 (-R_T)^++p_1(l_1-R_T)^++f_2(R_T-g_1)^+ \nonumber \\
&= (c-H(0))e^{rT}-p_1 (-R_T)^+ \nonumber
\end{align}

{\rm where $CF^D$ and $H^D$ represent the cash flows and hedging portfolios of EPS products with the default event from the counter party respectively.}
\end{example}

Assuming the initial premium of EPS $c$ to be fair under the normal case that $c=H^D(0)$, we can simplify the final cash flow for buffer EPS, Eq. \ref{eq6.4.1}, and for floor EPS, Eq. \ref{eq6.4.2}, to be only one term left. This term is actually the EPS providers' potential loss after hedging because of a third party's default when there is a financial crisis.
The following figures, Figure \ref{tp_default}, show the swap provider's cash flow after hedge for our two examples - buffer EPS and floor EPS, considering there is a default event from the counter party in the holding period of the reference portfolio. As a reminder, we assume the initial premium of EPS $c$ equals the cost of hedging strategies $H(0)$.

\begin{figure} [h!]
    \centering
    \subfloat[\centering buffer EPS]
    {{\includegraphics[width=7cm,height=5.5cm]{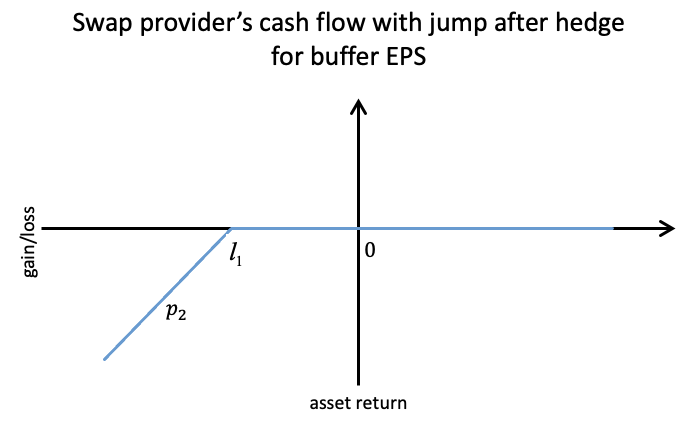} }}%
    \qquad
    \subfloat[\centering floor $\&$ floor-cap EPS]
    {{\includegraphics[width=7cm,height=5.5cm]{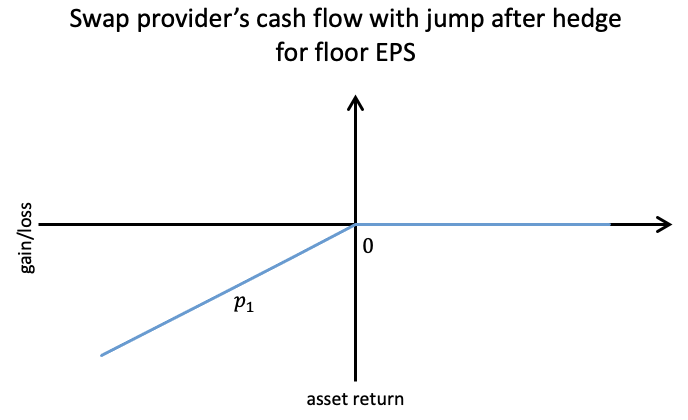} }}%
    \caption{Examples of Swap provider's cash flow after hedge when only third party default}%
    \label{tp_default}%
\end{figure}

From these examples, we can see that EPS providers will have an additional loss more than the EPS cash flow itself when there is a default event, because they also face the loss from put option sellers defaulting, together with EPS debt obligation.
The graph of the swap provider's cash flow with default events after hedge is an upward strict line with all negative values then a horizontal line with value zero, where the vertical axis represents gain or loss and the horizontal axis represents asset return $R_T$. The turning point of this graph is the larger point of the first loss interval $[l_{i+1},l_i)$ that participation rate $p_{i+1}$ is greater than zero. We set this turning point to be $\widehat{l}=l_i$ with participate rate $\widehat{p}=p_{i+1}$.
Moreover, the slope of the upward strict line is equal to the first non-zero participation rate for a loss of the underlying asset $\widehat{p}$. 
Thus we have the following proposition for this situation.

\begin{proposition} \label{prop_td}
Assuming only the counter-party has default probability, we can represent the swap provider's after-hedged cash flow at maturity $T$ as follows, together with hedging portfolios structured in Proposition \ref{prop5.2.1},

\begin{equation} \label{eq6.4.3}
CF_T^D(c,H) = (c-H(0))e^{rT} - \widehat{p} (\widehat{l}-R_T)^+ 
\end{equation}

where $CF^D$ means the EPS provider's cash flow in which the counter party defaults in the holding period of reference portfolio and $\widehat{p}$ is the first positive participation rate for loss interval. 
\end{proposition}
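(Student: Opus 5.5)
Start from Definition \ref{EPS CF} and Proposition \ref{prop5.2.1}, and split $H(T)$ into three groups: the calls $\mathrm{Call}_T(K^g_j,T)$, the long puts (coefficients $p_{i+1}-p_i>0$), and the short puts (coefficients $p_{i+1}-p_i<0$). Under the assumption that only the counterparty defaults, the calls and the short puts keep their full payoff, while the long puts pay nothing. Hence
\[
H^D(T)=H(T)-\sum_{i:\,p_{i+1}>p_i}\frac{p_{i+1}-p_i}{S_0}\,\text{Put}_T(K^l_i,T).
\]

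The static hedge of Proposition \ref{prop5.2.1} replicates $-\pp(R_T)$ exactly, i.e. $H(T)+\pp(R_T)=0$. Therefore
\[
CF_T^D(c,H)=(c-H(0))e^{rT}-\sum_{i:\,p_{i+1}>p_i}\frac{p_{i+1}-p_i}{S_0}\,(K^l_i-S_T)^+ .
\]
Since $(K^l_i-S_T)^+=S_0(l_i-R_T)^+$, the last sum equals $\sum_{i:\,p_{i+1}>p_i}(p_{i+1}-p_i)(l_i-R_T)^+$. It remains to show this sum collapses to the single term $\widehat p(\widehat l-R_T)^+$.

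This collapse is the main obstacle, because it does not hold for arbitrary piecewise linear $\pp$. It requires a structural restriction implicit in the examples and in the description of the figure: the participation rates $p_1,p_2,\dots$ start with zeros (a possible buffer), then take one positive value $\widehat p$ on the interval $[l_{i+1},l_i)$ with $\widehat l=l_i$, and afterwards either stay at $\widehat p$ or drop back to $0$ (a floor). The rates never increase a second time. Under this restriction there is exactly one index with $p_{i+1}>p_i$, namely the first positive rate, with increment $\widehat p-0=\widehat p$ at strike $K^l_i=S_0(1+\widehat l)$. The sum therefore reduces to $\widehat p(\widehat l-R_T)^+$, which gives \eqref{eq6.4.3}.

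I would state this monotonicity assumption explicitly, as "the sequence $(p_k)$ has a single upward jump", and verify it against the examples. For the buffer EPS (Example \ref{ex6.4.1}), $p_1=0$ and $p_2>0$, so $\widehat l=l_1$. For the floor and floor-cap EPS (Example \ref{ex6.4.2}), $p_1>0$ and $p_2=0$, so $\widehat l=l_0=0$ and $\widehat p=p_1$. In both cases the result matches \eqref{eq6.4.1} and \eqref{eq6.4.2}. Finally, I would remark that the same computation holds under the jump-diffusion dynamics, because replication is pathwise and does not depend on the model.
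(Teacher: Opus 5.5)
Your proof is correct, and it is more rigorous than the paper's argument. The paper gives no general computation. It works out the buffer and floor cases (Examples~\ref{ex6.4.1} and~\ref{ex6.4.2}) and reads off from Figure~\ref{tp_default} that the post-default cash flow is a single ramp with slope $\widehat p$ and a kink at $\widehat l$. It then asserts that this shape holds for a general EPS.

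You instead use exact replication $H(T)+\pp(R_T)=0$ to derive, for arbitrary piecewise linear $\pp$, the general identity $CF_T^D(c,H)=(c-H(0))e^{rT}-\sum_{i:\,p_{i+1}>p_i}(p_{i+1}-p_i)(l_i-R_T)^+$. You then isolate the condition under which this collapses to \eqref{eq6.4.3}.

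Your caveat matters. The single-upward-jump condition on $(p_k)$, with $p_0=0$, is also necessary, not just sufficient. For example, $p_1=\frac12$, $p_2=1$ gives the residual loss $\frac12(-R_T)^+ +\frac12(l_1-R_T)^+$. This has two kinks and is not of the form $\widehat p(\widehat l-R_T)^+$. So the proposition as stated holds only under the structural assumption the paper tacitly imports from its examples.

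The paper's route buys brevity and a picture. Yours buys a correct statement for general $\pp$ together with the general loss formula. When there are several upward jumps, the default adjustment of Proposition~\ref{exp_loss_ind} would have to be computed term by term from that formula.

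Both arguments rest on the same modelling convention: calls and the provider's short puts are unaffected by counterparty default. This includes the long call at $K^g_1$ in the floor-cap case.
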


The following figure, Figure \ref{EPS_td} presents the swap provider's cash flow after hedge for a general standard EPS, considering only the third-party default.

\begin{figure} [h!]
    \centering
    \includegraphics[width=12cm, height=8cm]{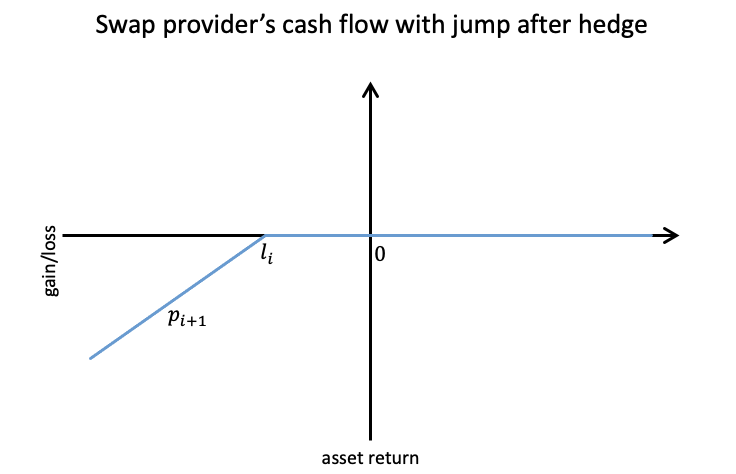}
    \caption{EPS provider's general cash flow after hedge when only third party default}
    \label{EPS_td}
\end{figure}

It can be seen that the EPS provider needs to cover a huge loss if the counter parties default. Fortunately, we need to mention that the loss is limited, the maximum loss of the portfolio is $-100\%$ that the value of the underlying asset drops to zero. 
Moreover, the cash flow with consideration of defaults is a random number, so that a "fair" initial premium $c$, which makes the after-hedge cash flow equal to zero $CF_T^D(c,H)=0$, cannot be predetermined before the maturity $T$. In this case, we need to apply the expectation to the cash flow equation in order to find a proper initial premium, called as default adjusted initial premium. 
We will try to find the risk-neutral pricing of EPS products under this assumption in the next section.

\subsection{Default Adjusted Initial Premium when Only Counter Party with Default Risk} \label{sec6.4.4}

After discussing the EPS provider's potential cash flows when only the counter party has default probabilities, we now need to find a default adjusted initial premium for EPS under the jump model with consideration of default risks. As we already mentioned, there are two parties we need to consider for default risk, the EPS provider and all other counter parties. Assuming only the EPS provider has default risks, we can conclude that the original hedging strategies without consideration of jumps and credit risks are actually super-hedging strategies for the EPS provider. In the meanwhile, if all financial institutions default when the jump happens, the EPS provider will default to the EPS buyer and the EPS products become useless for investors as there is no more loss protection. 
Thus we only need to focus on the most important situation the third party with default risk, where the hedging strategy in normal cases cannot fully hedge the potential losses of the EPS provider. Moreover, there are no derivatives that can fully hedge the influence of defaults and we can only try to find default adjusted initial premiums under general hedging strategies for the EPS provider. 

Fortunately, we can use expected loss after hedge (here we call it default adjustment in order to distinguish losses we discussed before) as a simple indicator of risk which only considers a single default probability and loss severity. We can now make a definition for the default adjusted initial premium by using the default adjustment.

\begin{definition} \label{def_cd}
{\rm Considering the static hedging strategy $H$ that can fully hedge the standard EPS product under the general situation without defaulting, the default adjusted initial premium $c^D$ should be defined in the following formula, under the assumption that only the counter party of the EPS provider has default probabilities,}

\begin{equation} \label{cd}
c^D := H(0) + \frac{B_0}{B_T} DA
\end{equation}

{\rm where $H(0)$ is the initial hedging cost, $DA$ is the EPS provider's default adjustment under the default risks' assumption and it should be discounted from maturity to initial time. The default adjustment can be calculated as follows:}

\begin{equation*}
\text{Default Adjustment} = \text{Loss Severity} \times \text{Default Probability}.
\end{equation*}
\end{definition}

As a remark, the loss severity in the default adjustment function can be rewritten as the expected potential loss conditional on the probability of default that 

\begin{equation*}
DA = \E(DA|P^D) \times P^D.
\end{equation*}

According to its definition, we can find the default adjusted initial premium for EPS to discuss both the normal case and under the default model. Firstly, the hedging portfolio has the same structure as discussed in Proposition \ref{prop5.2.1}. In the meanwhile, the initial premium $c$ with consideration of default risks paid by EPS buyers should be greater than the fair initial premium in the general case equals $H(0)$, in order to cover the EPS providers' potential loss under the jump model. Compared with fair initial premium $\widehat{c}$, this new default adjusted initial premium $c^D$ contains the conditional default adjustment of the EPS provider given by loss severity under default events multiplied by the default probability. In the next step, we want to find the probability of default under the random time default model first when considering an independent default event. Furthermore, we will try to find the default adjustment under both the vanilla Black-Scholes model and the jump-diffusion model.

{\bf Independent random time default probability}

\begin{proposition} \label{default_prob}
Consider the random time default model, which is introduced in Section \ref{sec6.2}, random time $\tau$ is used to present the time of jump happens. Assuming the density $\gamma^Q$ for indicator process $N$ under probability measure $\Q$ to be constant and stopping time (the first jump time) $\tau$ satisfies exponential distribution, we can use Proposition \ref{prop6.2.1} to get the probability of default happening under random time default model as following,

\begin{equation*}
P^D:=P(\tau < T) = 1- P(\tau > T) = 1- \EQ \Big\{ \exp{\Big(-\int_0^T \gamma_u^Q du \Big)} \Big\} = 1- e^{-\gamma^Q T}.    
\end{equation*}

\end{proposition}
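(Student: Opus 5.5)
The approach is to compute the distribution of $\tau$ directly from the conditional survival formula of Proposition \ref{prop6.2.1}, transferred to the pricing measure $\Q$. Under $\Q$ the compensator of the default indicator has density $\gamma^Q=\gamma^P(1+\phi)$, as in the Girsanov change of measure in Eq. \ref{rt_L} and the construction of $\widetilde M$. Hence the argument behind Proposition \ref{prop6.2.1} applies verbatim with $\gamma^P$ replaced by $\gamma^Q$. This is exactly the content of Eq. \ref{eq6.2.4} with $t=0$, $u=T$, which gives
\begin{equation*}
\Q(\tau>T\mid\mathcal F_T)=\exp\Big(-\int_0^T\gamma^Q_u\,du\Big).
\end{equation*}

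The steps are as follows.

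\begin{enumerate}
\item Write $P^D=\Q(\tau\le T)=1-\Q(\tau>T)$. The event $\{\tau=T\}$ has probability zero, since conditionally on $\mathcal F_T$ the time $\tau$ has an absolutely continuous (exponential-type) law. So $\Q(\tau<T)$ and $\Q(\tau\le T)$ coincide.
\item Apply the tower property:
\begin{equation*}
\Q(\tau>T)=\EQ\big[\Q(\tau>T\mid\mathcal F_T)\big]=\EQ\Big\{\exp\Big(-\int_0^T\gamma^Q_u\,du\Big)\Big\}.
\end{equation*}
This is the unconditional statement of Proposition \ref{prop6.2.1} under $\Q$.
\item Use the standing assumption that $\gamma^Q$ is a constant. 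The integral then equals $\gamma^Q T$, which is deterministic, so the expectation drops out and $P^D=1-e^{-\gamma^Q T}$. Equivalently, $\tau$ is exponential with rate $\gamma^Q$ and this is its distribution function at $T$.
\end{enumerate}

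The only point needing care is justifying that the survival formula holds under $\Q$ rather than $\bbP$. Constancy of $\gamma^Q$ requires $\phi$ to be a constant (or deterministic) choice, which Theorem \ref{rt_Q} permits because the event-risk premium is arbitrary. Beyond this, one needs that the change of measure preserves independence of $\tau$ from $\mathbf F$. For this I would note that $L^{\psi,\phi}$ factorizes into a Brownian part driven by $W$ and a jump part driven by $M$ with deterministic $\phi$. The jump part depends only on $\tau$, so under $\Q$ the variable $\tau$ remains independent of $\mathcal F_T$, with intensity $\gamma^Q$. The remaining steps are routine.
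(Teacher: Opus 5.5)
Your proposal is correct and follows essentially the same route as the paper, which applies Proposition~\ref{prop6.2.1} under $\Q$ (equivalently Eq.~\ref{eq6.2.4} with $t=0$, $u=T$), takes expectations, and uses the constancy of $\gamma^Q$ to obtain $1-e^{-\gamma^Q T}$. Your remarks on $\Q(\tau=T)=0$ and on the factorization of $L^{\psi,\phi}$ (with deterministic $\phi$) preserving the independence of $\tau$ from $\mathcal F_T$ make explicit what the paper leaves implicit. One minor caveat is inherited from the paper: the compensator in $\widetilde M$ actually gives the $\Q$-intensity $\phi\gamma^P$ rather than $\gamma^P(1+\phi)$, but only the constancy of $\gamma^Q$ enters the computation, so this does not affect the result.
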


{\bf Independent random time default adjustment}

Then we want to find the default adjustment conditional on the default probability as loss severity in order to get the whole default adjustment.
Before we discuss the conditional default adjustment, from the after static hedging cash flow of the EPS provider displayed in Section \ref{sec6.4.3} and Figure \ref{EPS_td}, we can see that if a default event occurs, there is a loss after hedging starting from the larger point $\widehat{l}$ of the first loss interval with a positive participation rate $\widehat{p}$. Fortunately, the value of the reference portfolio will only tend to zero, not be negative, so that this potential loss after hedging is bounded by the initial value of the reference portfolio $S_0$. Thus the largest potential loss after hedging that the EPS provider needs to cover for the EPS buyer is $(1+\widehat{l})$, multiplied by the EPS provider's participation rate, the maximum loss after hedging is equal to $\widehat{p}(1+\widehat{l})$ for per unit of nominal principal $N_p$. 
Moreover, we can use the distributions of the value of the underlying asset $S_T$ at maturity to find the conditional default adjustment.

\begin{proposition} \label{exp_loss_ind}
Consider the dynamics of the underlying asset $S$ satisfies the Black-Scholes model with random time default event according to Definition \ref{random_t}, the conditional default adjustment $\E(DA|P^D)$ for per unit of the nominal principal $N_p$ at the maturity $T$ will be given as

\begin{equation*}
\E(DA|P^D) = \widehat{p} e^{rT} N \bigg( \frac{\ln(1+\widehat{l}) - (r+ \frac{1}{2} \sigma^2)T}{\sigma \sqrt{T}} \bigg)    
\end{equation*}

where $N$ represents the cumulative density function of the standard normal distribution and $\widehat{l}$ is the larger point of the first loss interval with positive participation rate $\widehat{p}$. Then the EPS provider's default adjustment under the random time default model is equal to

\begin{equation} \label{el_eq1}
DA = (1-e^{-\gamma^Q T}) \widehat{p} e^{rT} N \bigg( \frac{\ln(1+\widehat{l}) - (r+ \frac{1}{2} \sigma^2)T}{\sigma \sqrt{T}} \bigg).   
\end{equation}

For the jump-diffusion model with the underlying asset $S$ satisfying Eq. \ref{eq6.1.1}, the default adjustment conditional on $n$ jumps in the holding period of the underlying asset with jump size $Y_i \sim N(\alpha, \delta^2)$ can be calculated as

\begin{equation*}
\E(DA|N_T=n) 
=\widehat{p} e^{r_n T} N \bigg( \frac{\ln(1+\widehat{l}) - (r_n + \frac{1}{2} \sigma_n^2)T}{\sigma_n \sqrt{T}} \bigg),
\end{equation*}

at the maturity, where $r_n$ and $\sigma_n$ have been defined in the jump-diffusion model section. With the same algorithm we discussed before, it can be concluded that the default adjustment for per unit of the nominal principal $N_p$ will be given as follows under the assumption of independent default,

\begin{equation} \label{el_eq2}
DA = (1-e^{-\gamma^Q T}) \widehat{p} \sum_{n=0}^{\infty} \frac{e^{-\lambda T}(\lambda T)^n}{n!} e^{r_n T} N \bigg( \frac{\ln(1+\widehat{l}) - (r_n + \frac{1}{2} \sigma_n^2)T}{\sigma_n \sqrt{T}} \bigg).   
\end{equation}
\end{proposition}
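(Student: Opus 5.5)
The plan is to compute the expected post-hedge loss directly from Proposition \ref{prop_td}. Take the premium to be $c=H(0)$, so the only random term left in $CF_T^D(c,H)$ is $-\widehat{p}(\widehat{l}-R_T)^+$, and this term appears only on the default event $\{\tau<T\}$. By Definition \ref{random_t}, $\tau$ is independent of $(B,S,\mathbf{F})$. Hence the expectation of the loss factorises as
\[
\E\big[\widehat{p}(\widehat{l}-R_T)^+\I_{\{\tau<T\}}\big]=P^D\,\E\big[\widehat{p}(\widehat{l}-R_T)^+\big],
\]
with $P^D=1-e^{-\gamma^Q T}$ taken from Proposition \ref{default_prob}. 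This identifies $\E(DA|P^D)$ with a plain expectation under $\Q$ that does not involve the default mechanism. Once $\E(DA|P^D)$ is known, \eqref{el_eq1} and \eqref{el_eq2} follow by multiplying by $P^D$, as in the remark after Definition \ref{def_cd}.

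\textbf{Black--Scholes case.} Write $R_T=S_T/S_0-1$ with $\ln(S_T/S_0)\sim N\big((r-\tfrac12\sigma^2)T,\sigma^2T\big)$. The loss is positive exactly on $\{S_T<K\}$, where $K:=S_0(1+\widehat{l})$. I will split the truncated expectation into two pieces:
\[
\E\big[(\widehat{l}-R_T)^+\big]=(1+\widehat{l})\,\Q(S_T<K)-\E\big[(S_T/S_0)\I_{\{S_T<K\}}\big].
\]
The first piece equals $(1+\widehat{l})N(-d_-)$. For the second piece I complete the square, which is the standard change of measure to the share numeraire; it gives $e^{rT}N(-d_+)$ with
\[
-d_+=\frac{\ln(1+\widehat{l})-(r+\tfrac12\sigma^2)T}{\sigma\sqrt T}.
\]
This is exactly the argument of $N$ in the statement. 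Equivalently, the whole expression is $e^{rT}\,\text{Put}_0(K^l,T)/S_0$ evaluated at the strike $K^l=S_0(1+\widehat{l})$, so it can be read off the put formula in Theorem \ref{jump_option}.

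\textbf{Main obstacle.} The difficulty is reconciling this computation with the displayed closed form. The statement keeps only the $e^{rT}N(-d_+)$ term, with no $(1+\widehat{l})N(-d_-)$ term. I would first recheck which quantity the authors call the conditional default adjustment. If it is meant to be the share-weighted tail $\E[(S_T/S_0)\I_{\{S_T<K\}}]$, the displayed formula is exact. If it is meant to be the full expected loss, the correct formula is $\widehat{p}\big[(1+\widehat{l})N(-d_-)-e^{rT}N(-d_+)\big]$, and I would note this discrepancy explicitly rather than force agreement.

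\textbf{Jump-diffusion case.} Condition on $N_T=n$. Using Proposition \ref{normal}, given $n$ jumps $\ln(S_T/S_0)$ is Gaussian with the parameters $r_n,\sigma_n$. The Black--Scholes computation above then applies verbatim with $(r,\sigma)$ replaced by $(r_n,\sigma_n)$, which gives $\E(DA|N_T=n)$. Summing over $n$ with Poisson weights $e^{-\lambda T}(\lambda T)^n/n!$ and multiplying by $P^D$ gives \eqref{el_eq2}. Here independence of $\tau$ from both $W$ and $J$ is again what lets $P^D$ factor out of the sum. One more point must be checked: Proposition \ref{normal} absorbs the factor $e^{(r_n-r)T}$ into the intensity $\lambda'$, whereas \eqref{el_eq2} keeps $\lambda$ together with $e^{r_nT}$. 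These two bookkeepings must be shown to be consistent, and the weights must be made to match.
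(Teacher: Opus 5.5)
Your plan follows the paper's proof. Independence of $\tau$ gives $\E(DA|P^D)=\E(DA)$. The Black--Scholes case is a truncated lognormal expectation computed by completing the square. The jump case conditions on $N_T=n$, substitutes $(r_n,\sigma_n)$, sums with weights $e^{-\lambda T}(\lambda T)^n/n!$, and multiplies by $P^D=1-e^{-\gamma^Q T}$.

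Your diagnosis of the obstacle is also right, and the paper's own computation confirms it. Its proof evaluates
\[
\widehat{p}\int_{-\infty}^{\ln(1+\widehat{l})} e^{u}\,\phi\Big(\frac{u-(r-\frac12\sigma^2)T}{\sigma\sqrt{T}}\Big)\,du ,
\]
with integrand $e^u$ instead of the loss $(1+\widehat{l})-e^u$ from Proposition \ref{prop_td}, and without the density's factor $1/(\sigma\sqrt{T})$. That is exactly your share-weighted tail $\widehat{p}\,\E[(S_T/S_0)\I_{\{S_T<K^l\}}]$, with $K^l=S_0(1+\widehat{l})$. Definition \ref{def_cd} requires the expected post-hedge loss, and read that way the displayed formulas are not correct. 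What your argument actually proves is
\[
\E(DA|P^D)=\widehat{p}\big[(1+\widehat{l})N(-d_-)-e^{rT}N(-d_+)\big]=\widehat{p}\,e^{rT}\,\frac{\text{Put}_0(K^l,T)}{S_0},
\]
that is, the compounded value of the defaulted long put. In the jump case the corresponding $(r_n,\sigma_n)$ bracket sits inside the Poisson sum. The difference is not cosmetic. With $\widehat{p}=0.8$, $\widehat{l}=-5\%$ and the parameters of Table \ref{table:para_jump}, the displayed quantity is about $0.27$, while the expected loss is about $0.04$. Stating the corrected formula and flagging the discrepancy, as you propose, is the right course.

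The weighting issue you leave open is not an obstacle. $\E(DA|N_T=n)$ is an undiscounted time-$T$ expectation, so its weights are simply $\Q(N_T=n)=e^{-\lambda T}(\lambda T)^n/n!$, as in \eqref{el_eq2}. The intensity $\lambda'$ in Proposition \ref{normal} appears only because each Black--Scholes term there is discounted at $r_n$ rather than $r$. Since $(r_n-r)T=\mu_J T+n(\alpha+\tfrac12\delta^2)$ with $\mu_J=-\lambda(e^{\alpha+\delta^2/2}-1)$, one has $e^{-\lambda T}(\lambda T)^n e^{(r_n-r)T}=e^{-\lambda' T}(\lambda' T)^n$. The two bookkeepings therefore agree term by term, which is also how the jump-case sum becomes $e^{rT}$ times the Merton put over $S_0$.

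Finally, the conditional law you use is the correct one: given $N_T=n$, $\ln(S_T/S_0)\sim N\big((r_n-\tfrac12\sigma_n^2)T,\sigma_n^2T\big)$. The paper's proof writes $\sigma^2$ in place of $\sigma_n^2$ in the mean, which is a typo.
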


\begin{proof}
Firstly, we consider the random time default model only, the conditional expected loss is actually independent of the default probability because the default event is independent of the valuation process of the underlying asset. We have the distribution of the underlying asset at the maturity $T$ that $S_T \sim S_0 N((r-\frac{1}{2}\sigma^2)T,\sigma^2 T)$. Using ${S_T/S_0 \leq 1+\widehat{l}}$, the conditional default adjustment for per unit of the nominal principal can be calculated by

\begin{align*}
\E(DA|P^D) &= \E(DA) = \widehat{p} \int_{-\infty}^{\ln(1+\widehat{l})} e^u \phi \bigg( \frac{u-(r-\frac{1}{2}\sigma^2)T}{\sigma \sqrt{T}} \bigg) \,du \\
&= \widehat{p} \int_{-\infty}^{\ln(1+\widehat{l})} \phi \bigg( \frac{u-(r+\frac{1}{2}\sigma^2)T}{\sigma \sqrt{T}} \bigg) e^{rT} \,du \\
&=\widehat{p} e^{rT} N \bigg( \frac{\ln(1+\widehat{l}) - (r+ \frac{1}{2} \sigma^2)T}{\sigma \sqrt{T}} \bigg),
\end{align*}

where $\phi(.)$ is the probability density function for standard normal distribution. Multiplying the conditional expected loss with the default probability $P^D=1-e^{-\gamma^Q T}$ that we have discussed in Proposition \ref{default_prob}, we will be given the expected loss.

Similarly, if we consider an independent default event under the jump-diffusion model, we now have the distribution of the underlying asset with $n$ jumps in the holding period as $S_T \sim S_0 N((r - \frac{\sigma^2}{2} - \lambda(e^{\alpha + \frac{\delta^2}{2}}-1))T + n \alpha, \sigma^2 T + n \delta^2)= S_0 N((r_n-\frac{1}{2}\sigma^2)T,\sigma_n^2 T)$, which has been identified in Section \ref{sec6.1.2}. Still using the assumption that the underlying asset and default event are independent, we have

\begin{equation*}
DA = \E(DA|P^D) \times P^D = P^D \times \sum_{n=0}^{\infty} 
\Q(N_T=n) \times \E(DA|N_T=n),  
\end{equation*}

then we will have the default adjustment.
\end{proof}

{\bf Default-adjusted initial premium}

In conclusion, with the default adjustment for the independent random time default events, we have the following proposition for the default adjusted initial premium.

\begin{proposition} \label{default_adj_c}
Under a general hedging strategy without consideration of default, the EPS provider uses a hedging portfolio built in Proposition \ref{prop5.2.1} with an initial value at time $0$ as follows,

\begin{equation*}
H(0)=\sum_{i=0}^{n}\frac{p_{i+1}-p_i}{S_0}\,\text{Put}_0(K^l_i,T) - \sum_{j=0}^{m}\frac{f_{j+1}-f_j}{S_0}\,\text{Call}_0(K^g_j,T)
\end{equation*}

where $K^l_i=S_0(1+l_i)$ and $K^g_j=S_0(1+g_j)$ for every $i=0,1,\dots, n$ and $j=0,1,\dots ,m$. 
Assuming only the counter party will default, the default adjusted initial premium $c^D$, which is used for justifying the influence of default risks according to Definition \ref{def_cd}, received by the EPS provider per unit of the nominal principal $N_p$ in this hedging strategy is given as

\begin{equation*}
c^D = H(0) +  e^{-rT} DA   
\end{equation*}

where $H(0)$ is given before, the same as in the general situation, and $DA$ is the default adjustment that has been given in Proposition \ref{exp_loss_ind}.
\end{proposition}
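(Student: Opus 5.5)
The statement is essentially an assembly of earlier results, so the plan is to verify three ingredients and then substitute them into Definition \ref{def_cd}.

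\emph{Step 1: the hedge.} Proposition \ref{prop5.2.1} gives the static portfolio $H$. Since all payoffs are linear in the options, its time-$0$ value is obtained by pricing each put and call at time $0$. This gives the displayed $H(0)$ with strikes $K^l_i=S_0(1+l_i)$ and $K^g_j=S_0(1+g_j)$. Without default, Definition \ref{EPS CF} and Section \ref{sec6.4.1} show that $c=H(0)$ makes $CF_T(c,H)=0$ almost surely, under both the Black--Scholes model and the jump-diffusion model. So $H(0)$ is the non-default part of the premium.

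\emph{Step 2: the default term.} If the counter party defaults, Proposition \ref{prop_td} gives
\[
CF^D_T(c,H)=(c-H(0))e^{rT}-\widehat p(\widehat l-R_T)^+ .
\]
The residual, unhedgeable loss is therefore $\widehat p(\widehat l-R_T)^+$, and it appears only on the event $\{\tau<T\}$. Because $\tau$ is independent of $(B,S,\mathbf F)$, the expected residual loss factorises as $P^D\times\E[\text{loss}]$. Proposition \ref{default_prob} gives $P^D=1-e^{-\gamma^Q T}$. Proposition \ref{exp_loss_ind} gives the conditional loss, so together they produce $DA$ in either form \eqref{el_eq1} or \eqref{el_eq2}.

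\emph{Step 3: the premium.} Choose $c$ so that the $\Q$-expectation of $CF^D_T$, weighted by the default indicator, vanishes:
\[
(c-H(0))e^{rT}=DA .
\]
With constant $r$ we have $B_0/B_T=e^{-rT}$. Definition \ref{def_cd} then gives $c^D=H(0)+e^{-rT}DA$.

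The only step that needs real care is Step 2, namely justifying that expectation and default probability factor. That factorisation rests on the independence of $\tau$ and on the boundedness of the loss by $\widehat p(1+\widehat l)$, which guarantees integrability. I would simply invoke the independence assumption of Definition \ref{random_t} and the bound stated before Proposition \ref{exp_loss_ind}.
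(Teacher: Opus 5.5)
Your proposal is correct and follows the paper's route. The paper gives no separate argument: the proposition is Definition \ref{def_cd} with $B_0/B_T=e^{-rT}$ for the constant-rate bank account, with $H(0)$ taken from Proposition \ref{prop5.2.1} and $DA$ imported from Proposition \ref{exp_loss_ind}. Your Steps 1--2 and the expected-cash-flow motivation in Step 3 are therefore context rather than load-bearing.

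One caution if you rely on that motivation. The closed form in Proposition \ref{exp_loss_ind} actually evaluates $P^D\,\widehat p\,\EQ[(1+R_T)\I_{\{R_T\le \widehat l\}}]$, not $P^D\,\widehat p\,\EQ[(\widehat l-R_T)^+]$. So the identity $(c-H(0))e^{rT}=DA$ holds here only because Definition \ref{def_cd} sets it that way, not because it is a genuine zero-expected-cash-flow computation.
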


Here we have another simpler pricing formula for default adjusted initial premium $c^D$. We can use the largest default adjustment after hedging $\widehat{p}(1+\widehat{l})$ (which corresponds to the value of the underlying asset drops to zero at the maturity) and the whole probability of the jump occurs as an additional compensator for this default adjusted initial premium, to cover the after-hedged potential loss of the EPS provider. From Proposition \ref{default_prob}, the default probability when considering independent default event can be given as $1-e^{-\gamma^Q T}$. Then multiplying the largest default adjustment and default probability, we have $c^{SD} = H(0) + e^{-rT} \widehat{p} (1+\widehat{l}) (1-e^{-\gamma^Q T})$, we use $c^{SD}$ represent this "super-hedging" default adjusted initial premium. In the real market, the actual loss after hedging in a financial crisis should be less or equal to the largest default adjustment we included in the initial premium $c^{SD}$ under the jump model. The EPS provider could fully cover the expected potential loss after hedging with this "super-hedging" default-adjusted initial premium, and there are still some remaining premiums if there is no default or small losses of the underlying asset.

\section{Numerical Studies for Jumps and Credit Risks} \label{sec6.5}

We have already introduced the jump-diffusion model from Merton \cite{Merton1976} in Section \ref{sec6.1} and the random time default event from Szimayer \cite{Szi2005} in Section \ref{sec6.2}. In the meanwhile, we discussed the EPS provider's after-hedge cash flows for standard EPS under the jump-diffusion model with and without consideration of the default event. 

As jumps really happen in the real market with huge influences, now we want to do numerical studies to see the hedging costs for standard EPS under jump models and compare these with the normal case without jumps introduced by Xu et al. \cite{Xu2024}.
We will consider different cases in the numerical study, starting with independent default events under the random time default model, different jump models, and different types of default events under the jump-diffusion model.
Before we simulate hedging strategies, first we want to see the changes in the valuation of European options from Vanilla European options to European options under jump models, which will be displayed in Section \ref{sec6.5.1}. It can be seen as a background to find hedging costs because our hedging portfolios contain several European options. 

Furthermore, we will discuss the influence of default risks by using the random time default model and the changes in the initial premiums according to jumps and default events under the jump-diffusion model.
For the analysis of default risks, we need to focus on the situation that only the counter party of the EPS provider has default possibilities that the EPS provider should be assumed to have no default risks to the investors. We need to mention that we cannot find a static hedging strategy that fully hedges the EPS provider's risks when jumps may happen and financial institutions have default risks. Fortunately, with some developments to the hedging strategies in the general case without jumps, we find default adjusted initial premiums for the situation where only the counter party of the EPS provider has default risk under jump models.

\subsection{Valuation of European Options with Single Jump} \label{sec6.3}

Before we try to analyse hedging strategies after considering jumps in the reference portfolio, we will start with the simplest case under the jump model - single jump. So in this section, we want to find the pricing formula for European options with a single jump. 
We can consider this situation in both two jump models discussed before - the jump-diffusion model in Section \ref{sec6.1} and the random time default model in Section \ref{sec6.2}. Moreover, we need to mention that in this single jump model, there are still two cases we want to discuss - exactly one jump and no more than one jump in the holding period. 

When considering a single jump, we make a strong assumption that the jumps will not happen anymore after the first jump, or actually, we don't study the later jumps. With this strong assumption, we can see that no matter for exactly one jump situation or no more than one jump situation, this single jump model only researches several particular situations in the whole jump model, without considering jumps after the first possible jump. However, it is still meaningful for us to study the structures of EPS and its hedging strategies so that we can have a look at the influence of certain jumps.  

In this section, our next step is to find the valuation formula of European options when there is exactly one jump in the holding period of the reference portfolio.

{\bf Valuation Formula for European Options with Single Jump} \label{sec6.3.1}

Firstly, we want to find the valuation formula of European options when there is exactly one jump. We will use the jump-diffusion model, conditioning on the event of exactly one jump here, and find some explicit pricing formulas for the European options.
According to Proposition \ref{normal}, we can conclude that if there is only one jump, the European options prices in the jump-diffusion model are given as follows:

\begin{proposition} \label{normal_one}
If there is exactly one jump in the holding period of the reference portfolio which is normally distributed with $Y \sim N(\alpha, \delta^2)$ under the jump-diffusion model, the pricing formula for the European call option at the initial $C_0^{n1}(S,K,T)$ for underlying asset $S$ with jump can be represented as

\begin{equation} \label{eq6.3.1}
C_0^{n1}(S,K,T) = e^{-\lambda T(e^{\alpha + \frac{\delta^2}{2}}-1)} e^{\alpha + \frac{\delta^2}{2}} 
C_0^{BS}(S, K, r_1, \sigma_1, T),    
\end{equation}

where 
\begin{align*}
\sigma_1^2 T &= \sigma^2 T + \delta^2 \\
r_1 T &= (r+\mu_J)T + \alpha + \frac{\delta^2}{2}. 
\end{align*}

Similarly, the pricing formula for European put option $P^{n1}(S,K,T)$ for underlying asset $S$ with exactly one jump in the holding period can be represented as

\begin{equation} \label{eq6.3.2}
P_0^{n1}(S,K,T) = e^{-\lambda T(e^{\alpha + \frac{\delta^2}{2}}-1)} e^{\alpha + \frac{\delta^2}{2}}
P_0^{BS}(S, K, r_1, \sigma_1, T).    
\end{equation}
\end{proposition}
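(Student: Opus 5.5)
Proposition \ref{normal_one} follows by extracting the $n=1$ term of the series in Proposition \ref{normal}, carrying over the weight attached to that term. So the plan is to go back to the derivation of Proposition \ref{normal}, stop before summing over $n$, and then simplify the $n=1$ weight.

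\textbf{Conditioning on one jump.} Under $\Q$ the log-price on $\{N_T=1\}$ is
\[
(r-\tfrac{\sigma^2}{2}+\mu_J)T+\sigma W_T+Y_1,
\]
with $\mu_J=-\lambda(e^{\alpha+\delta^2/2}-1)$ and $Y_1\sim N(\alpha,\delta^2)$ independent of $W$. Hence it is Gaussian with mean $(r+\mu_J-\frac{\sigma^2}{2})T+\alpha$ and variance $\sigma^2T+\delta^2=\sigma_1^2T$.

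I rewrite the mean by adding and subtracting $\delta^2/2$, exactly as in the proof of Proposition \ref{normal}. This gives mean $(r_1-\frac{\sigma_1^2}{2})T$ with $r_1T=(r+\mu_J)T+\alpha+\frac{\delta^2}{2}$. The conditional discounted expectation is then
\[
e^{-rT}\EQ[(S_T-K)^+\mid N_T=1]=e^{(r_1-r)T}C_0^{BS}(S,K,r_1,\sigma_1,T),
\]
because a Black--Scholes call with rate $r_1$ equals $e^{-r_1T}$ times the expectation of the payoff under a lognormal law with drift $r_1$.

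\textbf{Simplifying the weight.} Next I expand the prefactor:
\[
e^{(r_1-r)T}=e^{\mu_JT}e^{\alpha+\delta^2/2}=e^{-\lambda T(e^{\alpha+\delta^2/2}-1)}e^{\alpha+\delta^2/2}.
\]
This is exactly the factor in \eqref{eq6.3.1}. The statement therefore reads as the value of the option conditional on exactly one jump, i.e.\ the summand of Proposition \ref{normal} at $n=1$ before the Poisson probability $e^{-\lambda T}\lambda T$ is absorbed into $\lambda'$.

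\textbf{Main obstacle.} The hardest step is fixing the interpretation: whether the Poisson weight $\Q(N_T=1)$ is included or the price is conditional on $\{N_T=1\}$. The stated formula has no factor $\lambda T e^{-\lambda T}$, so I will adopt the conditional interpretation. The compensator $\mu_J$ computed from the original intensity $\lambda$ is retained in the drift; this is what produces the $e^{-\lambda T(\cdot)}$ factor. I will state this explicitly.

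\textbf{Put.} The put formula \eqref{eq6.3.2} follows identically, replacing $(S_T-K)^+$ by $(K-S_T)^+$ and using the Black--Scholes put formula with parameters $(r_1,\sigma_1)$. No put--call parity argument is needed, since the same conditional Gaussian law of $\ln S_T$ applies.
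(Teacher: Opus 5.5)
Your proposal is correct and follows the paper's proof. The paper also reads the statement as the $n=1$ term of Eq.~\eqref{eq6.1.7} divided by the Poisson weight $e^{-\lambda T}\lambda T$, i.e.\ conditioning on $N_T=1$, and its resulting prefactor $e^{-\lambda' T}\lambda' T/(e^{-\lambda T}\lambda T)$ equals your $e^{(r_1-r)T}=e^{\mu_J T}e^{\alpha+\delta^2/2}$; the only difference is that you compute the conditional expectation directly, which makes explicit (where the paper leaves it implicit) that the compensator $\mu_J$ with the original $\lambda$ is retained.
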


\begin{proof}
We first have the explicit pricing formula of European options with particular $n$ jumps from the sum part of Proposition \ref{normal}, together with its weight under the jump-diffusion model. We need to mention that this weight and the pricing formula of European options under jumps are multiplied together which forms each tern of the sum. The weight is actually the probability of exactly $n$ jumps happening in the holding period, with a sum equal to $1$, which is presented as the first part of the sum in Theorem \ref{jump_option}. 

Here, we make a strong assumption that there is exactly one jump in the holding period. We can take the pricing formula from Eq. \ref{eq6.1.7} by choosing $n=1$ and now there is no weight of one jump, it is actually equal to $1$ because of the strong assumption. Thus we should divide the pricing formula of European options with one jump from Eq. \ref{eq6.1.7} by the relative weight before, $e^{-\lambda T}(\lambda T)$, and we can get the pricing formula of a European call option with exactly one jump in the holding period.

Using the pricing formula of European put options under the jump-diffusion model or put-call parity, we can get the valuation of European put options with exactly one jump.
\end{proof}

As a reminder, here we can use either the valuation formula of European options or put call parity discussed in Theorem \ref{pc_parity} to price the European options.

Under the random time default model, we actually assume that there is no more than one jump during the holding period at a random time $\tau$. We cannot decide whether this jump will happen or not, thus we cannot give an explicit pricing formula for European options with exactly one jump.

{\bf Valuation Formula for European Options with No More than One Jump} \label{sec6.3.2}

We also want to find the pricing formula for European options when there is no more than one jump - no jump or one jump. We need to mention that this is the most important case when we consider default risk. If there is exactly one jump, then we can say that financial institutions will be unable to pay their debts almost surely when default risks exist. However, if we cannot know whether there is a jump or not, then default risk becomes a probability that the financial institutions will default when a jump occurs and the financial institutions will not default when there is no jump.

The valuation formula of European options with no more than one jump can be regarded as the combination of two parts, options without jumps and options with one jump.
Thus we have the following propositions to present the value of European options.

\begin{proposition} \label{normal_z}
If there is no more than one jump in the reference portfolio which is normally distributed with $Y \sim N(\alpha, \delta^2)$ under the jump-diffusion model, the pricing formula for European call option $C^{n0}(S,K,T)$ for underlying asset $S$ with jump can be represented as

\begin{equation} \label{eq6.3.5}
C_0^{n0}(S,K,T) = \frac{e^{-\lambda' T}}
{e^{-\lambda T} (1+\lambda T)} C_0^{BS}(S, K, r+\mu_J, \sigma, T) 
+\frac{e^{-\lambda' T}(\lambda' T)}{e^{-\lambda T} (1+\lambda T)} 
C_0^{BS}(S, K, r_1, \sigma_1, T),    
\end{equation}

where 
\begin{align*}
\lambda' &= \lambda e^{\alpha + \frac{\delta^2}{2}} \\
\sigma_1^2 T &= \sigma^2 T + \delta^2 \\
r_1 T &= (r+\mu_J)T + \alpha + \frac{\delta^2}{2}. 
\end{align*}

Moreover, we have the pricing formula for European put option $P_0^{n0}(S,K,T)$ for underlying asset $S$ with a jump as follows

\begin{equation} \label{eq6.3.6}
P_0^{n0}(S,K,T) = \frac{e^{-\lambda' T}}
{e^{-\lambda T} (1+\lambda T)} P_0^{BS}(S, K, r+\mu_J, \sigma, T) 
+\frac{e^{-\lambda' T}(\lambda' T)}{e^{-\lambda T} (1+\lambda T)}
P_0^{BS}(S, K, r_1, \sigma_1, T).    
\end{equation}
\end{proposition}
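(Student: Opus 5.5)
The plan is to follow the same route as Proposition \ref{normal_one}, now keeping the first two terms of the Merton series in Proposition \ref{normal}. Throughout, ``no more than one jump'' is read as conditioning on the event $\{N_T\le 1\}$. Its $\Q$-probability is
\begin{equation*}
\Q(N_T\le 1)=e^{-\lambda T}(1+\lambda T).
\end{equation*}

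First, I will write the conditional price. By the argument in the proof of Theorem \ref{jump_option}, conditioning on the number of jumps gives
\begin{equation*}
C_0^{n0}=e^{-rT}\EQ\big[(S_T-K)^+\,\big|\,N_T\le 1\big]
=\frac{1}{\Q(N_T\le 1)}\sum_{n=0}^{1}\Q(N_T=n)\,e^{-rT}\EQ\big[(S_T-K)^+\,\big|\,N_T=n\big].
\end{equation*}
The dynamics are kept as in \eqref{eq6.1.2}, with the compensator $\mu_J$ of Theorem \ref{compensator} unchanged; this is the same convention used for the exactly-one-jump case.

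Second, I will identify each conditional term using the computation in the proof of Proposition \ref{normal}. Conditional on $N_T=n$, $\ln(S_T/S_0)$ is normal with mean $(r_n-\tfrac12\sigma_n^2)T$ and variance $\sigma_n^2T$. Hence $e^{-rT}\EQ[(S_T-K)^+\mid N_T=n]=e^{(r_n-r)T}C_0^{BS}(S,K,r_n,\sigma_n,T)$. For $n=0$ we have $r_0=r+\mu_J$ and $\sigma_0=\sigma$, and for $n=1$ we have $(r_1,\sigma_1)$ as in the statement.

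Third, I will combine the Poisson weight with the factor $e^{(r_n-r)T}$, exactly as at the end of the proof of Proposition \ref{normal}:
\begin{equation*}
e^{-\lambda T}\frac{(\lambda T)^n}{n!}\,e^{(r_n-r)T}=e^{-\lambda T}\frac{(\lambda T)^n}{n!}\,e^{\mu_J T}\big(e^{\alpha+\delta^2/2}\big)^n=e^{-\lambda' T}\frac{(\lambda' T)^n}{n!}.
\end{equation*}
Here I use $\mu_J=-\lambda(e^{\alpha+\delta^2/2}-1)$, so that $e^{-\lambda T}e^{\mu_J T}=e^{-\lambda' T}$. For $n=0$ the weight is $e^{-\lambda' T}$, and for $n=1$ it is $e^{-\lambda' T}\lambda' T$. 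Dividing by $e^{-\lambda T}(1+\lambda T)$ gives \eqref{eq6.3.5}.

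The put formula \eqref{eq6.3.6} follows by the identical computation with $(K-S_T)^+$, since the conditional lognormal structure is the same. One could alternatively use a conditional version of the parity in Theorem \ref{pc_parity}, but the direct route is cleaner.

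The main obstacle is the interpretation in the first step. I must justify normalising by the unadjusted probability $e^{-\lambda T}(1+\lambda T)$ while keeping the drift $r+\mu_J$ from the unconditioned model. With this convention, the conditioned discounted price is not a martingale. I would state explicitly that the formula is the conditional expectation of the payoff given $\{N_T\le1\}$ under the original $\Q$, matching the convention of Proposition \ref{normal_one}, rather than a price under a new martingale measure.
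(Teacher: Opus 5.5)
Your proposal is correct and follows the paper's route. You keep the $n=0$ and $n=1$ terms of the Merton series in Proposition \ref{normal}, with weights $e^{-\lambda' T}(\lambda' T)^n/n!$, and divide by $\Q(N_T\le 1)=e^{-\lambda T}(1+\lambda T)$, exactly as the proof of Proposition \ref{normal_one} does. Reading the formula explicitly as $e^{-rT}\EQ[(S_T-K)^+\mid N_T\le 1]$ under the unconditioned $\Q$ is in fact sharper than the paper's claim that the division makes the weights sum to one: the renormalised weights sum to $e^{-\lambda' T}(1+\lambda' T)/(e^{-\lambda T}(1+\lambda T))$, which equals $1$ only when $\alpha+\delta^2/2=0$, and this is exactly your non-martingale caveat.
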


\begin{proof}
This proof is similar to the proof in Proposition \ref{normal_one}, with some adjustments. Here we need to add two situations together, zero jumps and only one jump, the valuation formula of European options under these two situations with their weights can be found in Theorem \ref{jump_option}. 

Now we need to find the relative weights of these two situations to make the sum of weights to be $1$. We find the relative weights by dividing the sum of two weights $e^{-\lambda T} (1+\lambda T)$, $e^{-\lambda T}$ for zero jump and $e^{-\lambda T}(\lambda T)$ for one jump, into the pricing formula of European options. After division, we can get the pricing formula of the European call option with no more than one jump in the holding period as presented by Eq. \ref{eq6.3.5}.

Similarly, we can get the pricing formula for European put options with no more than one jump.
\end{proof}

Here we can use either the valuation formula of European options or put call parity discussed in Theorem \ref{pc_parity} to price the European options.  
Together with Section \ref{sec6.3.1}, we get the valuation formulas of European options both when there is exactly one jump and when no more than one jump.
After finding the pricing formulas for European options with a single jump, we can now evaluate the hedging costs for standard EPS.

\subsection{Numerical Study for European Options under Jump Model} \label{sec6.5.1}

In this section, we do a numerical study for European options, considering the Vanilla case without jumps and default risks, under the jump-diffusion model, and under the Vanilla model with random time default events. 
The following Table \ref{table:options_ud_jump} shows the pricing of European options under the Vanilla case (no jump) and under the jump model and we only consider negative jumps in the holding period. Several cases have been considered in the table, vanilla case (no jump), exactly one jump, no more than one jump, several jumps (consider maximum 20 jumps) in the jump-diffusion model, and in the random time default model. 

The parameters for options are defined as follows: interest rate $r=1.5\%$, volatility $\sigma=20\%$, the initial price of stock $S_0 = 100$, strike price $K=100$, and time to maturity $T=1 year$. Other parameters related to jump models, jump intensity $\lambda$, jump compensator $\mu_J$, and jump size are given in Table \ref{table:options_ud_jump}. In Merton's jump-diffusion model, we consider jump sizes to comply with a normal distribution that $Y_i \sim N(\alpha,\delta^2)$ and negative mean values $\alpha<0$ (which represent negative jumps). After assuming the distribution of jump size, we can now simplify jump compensator $\mu_J$ for the jump-diffusion model to be minus jump intensity $-\lambda$ times the mean value of the jump that $\mu_J = -\lambda (e^{\alpha+\frac{\delta^2}{2}}-1) \approx -\lambda \alpha$.

\begin{table}[h!]
\centering
\begin{tabular}{ccc|c|ccc|c}
\hline
\hline
\multicolumn{3}{c|}{Call options (parameters)} &
Vanilla & 
\multicolumn{3}{c|}{Jump diffusion}&
Default \\
 
\hline
$\lambda$ & $\mu_J$ & jump size 
&Call
&Call(1j) &Call($\leq 1$j) & Call
&Call
 \\
\hline
0.1& 0.02& $N(-0.2,0.1^2)$ 
&8.6728 &2.8261 &9.2014 &9.2176 &7.8475 \\
0.1& 0.04& $N(-0.4,0.1^2)$ 
&8.6728 &0.5159 &10.1139 &10.0892 &7.8475 \\
0.2& 0.02& $N(-0.1,0.04^2)$ 
&8.6728 &5.0674 &9.0538 &9.0293 &7.1007 \\
0.2& 0.04& $N(-0.2,0.1^2)$
&8.6728 &3.3070 &9.8030 &9.7669 &7.1007 \\
0.2& 0.08& $N(-0.4,0.1^2)$
&8.6728 &0.7780 &11.6704 &11.5277 &7.1007 \\
0.2& 0.08& $N(-0.4,0.15^2)$
&8.6728 &1.2416 &11.7364 &11.6288 &7.1007 \\
0.3& 0.03& $N(-0.1,0.04^2)$
&8.6728 &5.4587 &9.3196 &9.2094 &6.4250 \\
0.3& 0.06& $N(-0.2,0.1^2)$
&8.6728 &3.8491 &10.4753 &10.3201 &6.4250 \\
0.3& 0.12& $N(-0.4,0.1^2)$
&8.6728 &1.1449 &13.3333 &12.9650 &6.4250 \\
0.5& 0.1& $N(-0.2,0.1^2)$
&8.6728 &5.1332 &12.0284 &11.4367 &5.2603 \\
0.5& 0.2& $N(-0.4,0.15^2)$
&8.6728 &3.1386 &17.2000 &16.1328 &5.2603 \\

\hline
\hline

\multicolumn{3}{c|}{Put options (parameters)} &
Vanilla & 
\multicolumn{3}{c|}{Jump diffusion}&
Default \\
 
\hline
$\lambda$ & $\mu_J$ & jump size 
&Put
&Put(1j) &Put($\leq 1$j) & Put
&Put
\\
\hline
0.1& 0.02& $N(-0.2,0.1^2)$
&7.1840 &17.4252 &7.3399 &7.8559 &6.5004 \\
0.1& 0.04& $N(-0.4,0.1^2)$
&7.1840 &28.8330 &7.6499 &9.4135 &6.5004 \\
0.2& 0.02& $N(-0.1,0.04^2)$ 
&7.1840 &11.1902 &7.1528 &7.5982 &5.8818 \\
0.2& 0.04& $N(-0.2,0.1^2)$
&7.1840 &16.1716 &7.3178 &8.5375 &5.8818 \\
0.2& 0.08& $N(-0.4,0.1^2)$
&7.1840 &26.0458 &7.8019 &11.7282 &5.8818 \\
0.2& 0.08& $N(-0.4,0.15^2)$
&7.1840 &26.1697 &7.7962 &11.6317 &5.8818 \\
0.3& 0.03& $N(-0.1,0.04^2)$
&7.1840 &10.6502 &7.0349 &7.8082 &5.3221 \\
0.3& 0.06& $N(-0.2,0.1^2)$
&7.1840 &14.9543 &7.1668 &9.2283 &5.3221 \\
0.3& 0.12& $N(-0.4,0.1^2)$
&7.1840 &23.2940 &7.7110 &14.1085 &5.3221 \\
0.5& 0.1& $N(-0.2,0.1^2)$
&7.1840 &12.6431 &6.6113 &10.6372 &4.3573 \\
0.5& 0.2& $N(-0.4,0.15^2)$
&7.1840 &18.3626 &7.0863 &18.8140 &4.3573 \\

\hline
\end{tabular}
\caption{Valuations of European Options under Jump Models}
\label{table:options_ud_jump}
\end{table}

We can see first that under the Vanilla Black-Scholes model with random time default events from Szimayer \cite{Szi2005}, values of both call and put options are smaller than vanilla options, which is consistent with our pricing formula for European options under the random time default model. The default is seen as a random event in the random time default model which is not included in the dynamics of stock, the intensity $\lambda$ here can be seen as a negative compensator similar to the interest rate, thus the option price drops. 

Then we need to focus on the jump-diffusion model, which is more likely to happen in the real market. 
Consider the case there is exactly one negative jump in the holding period, the value of the call option is decreasing and the value of the put option is increasing, because the probability of stock price drops is increasing when compared with the vanilla case (no jump). 
It is consistent with our imaging that with negative jumps, put options will be more valuable and call options will become less valuable.

However, for no more than one jump case and several possible jump cases in the jump-diffusion model, the value of the call option is greater than the vanilla case. It is because the jump compensator $\mu_J$ is big compared with the interest rate and makes the value of no jump call options higher than in the vanilla case (the situation with the biggest proportion, about 70\%-90\%). Furthermore, the cases with no more than one jump actually include about 90\% probability in all situations, together with a relatively small influence of negative jumps to call option value, the value of call options in no more than one jump case and several possible jump cases are very close. 
For put options, the value of options with no more than one jump and several possible jumps are still greater than vanilla put options in most cases, with a larger difference in these two cases compared with call options. The reason for this is the larger influence of negative jumps on put options, with more possible jumps, the value of put options will become higher.

Moreover, with a greater negative mean value of jump size, which means a larger negative jump, the changes in European option prices are greater. 
However, if the jump intensity $\lambda$ increases, which means the frequency of jumps in the same period of time may become higher, the changes of both call and put options decrease.

\subsection{Numerical Study for Hedging Costs under Jump Models} \label{sec6.5.3}

After discussing the valuations of European options under jump models, we now want to see the impact of jumps in the underlying asset on the hedging costs of the standard EPS products. 
Before we do the numerical study, we have some assumptions here. The most important assumption is that the EPS provider will never default to the investors. Furthermore, we focus on the influence of jumps to the hedging costs of the standard EPS products in this numerical study, we do not analyse the potential after-hedge cash flows of the EPS provider because there is no static hedging strategy that can fully hedge the standard EPS products if default events happen. 

We considered the jump-diffusion model with some different situations, firstly, the jump-diffusion model with no more than one jump, which is the general situation in the real market. Then we consider the underlying asset with exactly one jump in the holding period, which can be regarded as a default event if we consider default will always happen at the time of the jump. In the meanwhile, the general jump-diffusion with several possible jumps is included and we assume here no more than 20 jumps inside the holding period of the underlying assets. 
As a remark, we also include the comparable random time default model, it actually is a model that evaluates default risks and it only happens no more than once, so we assume the values of European options have already considered default risks and all financial institutions could default. 
After all, we have the hedging costs called Vanilla hedging costs as another comparison which represents the general case without consideration of jumps.

In this numerical study for default risks, we will choose the general parameters for standard EPS similar to those in Section \ref{sec6.0}, which has been shown in the following Table \ref{table:para_jump}. Moreover, we fix the EPS provider's participation rates to protection legs and fee legs as some constants, because they are less important than the parameters of jumps and default risks.

\begin{table}[h!]
\centering
\begin{tabular}{lc}
\hline
Parameter & Value\\
\hline
Interest rate $r$ & 1.5\% \\
Volatility $\sigma$ & 20\% \\
Initial price of reference portfolio $S_0$ & 100 \\
Time to maturity $T$ & 1 year \\
Protection leg participation $p$ & 0.8 \\
Fee leg participation $f$ & 0.5 \\
\hline
\end{tabular}
\caption{Parameters for Standard EPS under Jump Models}
\label{table:para_jump}
\end{table}

We now get the hedging costs $H(0)$ under several different situations and the results are given in the following Table \ref{table:H0_jump}. Firstly, we have (Van) means Vanilla case without jumps, this is a general fair hedging cost for comparison. The hedging costs under jump-diffusion models are given as ($\leq 1$j) for no more than one jump in the holding period under jump diffusion morel, (1j) for exactly one jump, and (jd) for several possible jumps under the jump-diffusion model (here we use no more than 20 jumps). Apart from that, we have (rt) representing the hedging costs under the random time default model only, without jumps in the underlying asset, as another comparable value.

\begin{table}[h!]
\centering
\begin{tabular}{c|cccc|ccccc}
\hline
\hline
Buffer & $\lambda$ & jump size &$l_1$&$g_1$& $H(0)$(Van) & $H(0)(\leq 1$j) & $H(0)$(1j)& $H(0)$(jd) & $H(0)$(rt) \\
\hline
[1]&0.1& $N(-0.2,0.1^2)$ & -5\% & 5\% & 0.0069 &0.0061 &0.0995 &0.0094 &0.0062 \\

[2]&0.1& $N(-0.2,0.1^2)$ & -5\% & 10\% & 0.0155 &0.0150 &0.1026 &0.0184 &0.0140 \\

[3]&0.2& $N(-0.2,0.1^2)$ & -5\% & 5\% & 0.0069 &0.0037 &0.0888 &0.0120 &0.0056 \\

[4]&0.2& $N(-0.1,0.04^2)$ & -5\% & 10\% & 0.0155 &0.0141 &0.0527 &0.0170 &0.0127 \\

[5]&0.2& $N(-0.2,0.1^2)$ & -5\% & 10\% & 0.0155 &0.0130 &0.0924 &0.0213 &0.0127 \\

[6]&0.2& $N(-0.4,0.15^2)$ & -5\% & 10\% & 0.0155 &0.0109 &0.1719 &0.0381 &0.0127 \\

[7]&0.2& $N(-0.2,0.1^2)$ & -10\% & 10\% & 0.0014 &-0.0006 &0.0668 &0.0061 &0.0011 \\

[8]&0.2& $N(-0.4,0.15^2)$ & -10\% & 10\% & 0.0014 &-0.0017 &0.1391 &0.0216 &0.0011 \\

[9]&0.3& $N(-0.2,0.1^2)$ & -5\% & 5\% & 0.0069 &0.0001 &0.0783 &0.0145 &0.0051 \\
 
[10]&0.3& $N(-0.1,0.04^2)$ & -5\% & 10\% & 0.0155 &0.0124 &0.0480 &0.0177 &0.0115 \\

[11]&0.3& $N(-0.2,0.1^2)$ & -5\% & 10\% & 0.0155 &0.0098 &0.0824 &0.0241 &0.0115 \\
 
[12]&0.3& $N(-0.4,0.1^2)$ & -5\% & 10\% & 0.0155 &0.0041 &0.1501 &0.0503 &0.0115 \\

[13]&0.5& $N(-0.2,0.1^2)$ & -5\% & 10\% & 0.0155 &0.0006 &0.0626 &0.0298 &0.0094 \\
 
[14]&0.5& $N(-0.4,0.1^2)$ & -5\% & 10\% & 0.0155 &-0.0160 &0.1086 &0.0734 &0.0094 \\

\hline
\hline

Floor & $\lambda$ & jump size &$l_1$&$g_1$& $H(0)$(Van) & $H(0)(\leq 1$j) & $H(0)$(1j)& $H(0)$(jd) & $H(0)$(rt) \\
\hline
[1]&0.1& $N(-0.2,0.1^2)$ & -10\% & 5\% & -0.0003 &-0.0032 &0.0473 &-0.0018 &-0.0003 \\
 
[2]&0.1& $N(-0.2,0.1^2)$ & -10\% & 10\% & 0.0083 &0.0057 &0.0504 &0.0072 &0.0075 \\

[3]&0.1& $N(-0.2,0.1^2)$ & -15\% & 10\% & 0.0186 &0.0160 &0.0737 &0.0181 &0.0168 \\
 
[4]&0.2& $N(-0.2,0.1^2)$ & -5\% & 5\% & -0.0144 &-0.0203 &0.0172 &-0.0185 &-0.0118 \\
 
[5]&0.2& $N(-0.2,0.1^2)$ & -10\% & 10\% & 0.0083 &0.0026 &0.0464 &0.0060 &0.0068 \\
 
[6]&0.2& $N(-0.1,0.04^2)$ & -15\% & 10\% & 0.0186 &0.0166 &0.0478 &0.0188 &0.0152 \\
 
[7]&0.2& $N(-0.2,0.1^2)$ & -15\% & 10\% & 0.0186 &0.0128 &0.0683 &0.0177 &0.0152 \\
 
[8]&0.2& $N(-0.4,0.15^2)$ & -15\% & 10\% & 0.0186 &0.0023 &0.0951 &0.0142 &0.0152 \\
 
[9]&0.3& $N(-0.2,0.1^2)$ & -5\% & 5\% & -0.0144 &-0.0239 &0.0139 &-0.0205 &-0.0107 \\
 
[10]&0.3& $N(-0.2,0.1^2)$ & -10\% & 10\% & 0.0083 &-0.0009 &0.0422 &0.0049 &0.0062 \\
 
[12]&0.3& $N(-0.2,0.1^2)$ & -15\% & 10\% & 0.0186 &0.0090 &0.0627 &0.0172 &0.0138 \\
 
[13]&0.3& $N(-0.4,0.15^2)$ & -15\% & 10\% & 0.0186 &-0.0065 &0.0880 &0.0125 &0.0138 \\
 
[14]&0.5& $N(-0.2,0.1^2)$ & -10\% & 10\% & 0.0083 &-0.0094 &0.0328 &0.0026 &0.0051 \\
 
[15]&0.5& $N(-0.2,0.1^2)$ & -15\% & 10\% & 0.0186 &-0.0002 &0.0504 &0.0164 &0.0113 \\
 
[16]&0.5& $N(-0.4,0.15^2)$ & -15\% & 10\% & 0.0186 &-0.0263 &0.0708 &0.0098 &0.0113 \\

\hline
\hline

FC & $\lambda$ & jump size &$l_1$&$g_1$& $H(0)$(Van) & $H(0)(\leq 1$j) & $H(0)$(1j)& $H(0)$(jd) & $H(0)$(rt) \\
\hline
 
[1]&0.1& $N(-0.2,0.1^2)$ & -10\% & 10\% &0.0322 &0.0317 &0.0571 &0.0331 &0.0291 \\

[2]&0.1& $N(-0.2,0.1^2)$ & -15\% & 10\% &0.0425 &0.0420 &0.0804 &0.0441 &0.0384 \\
 
[3]&0.2& $N(-0.2,0.1^2)$ & -10\% & 10\% &0.0322 &0.0309 &0.0545 &0.0341 &0.0264 \\
 
[4]&0.2& $N(-0.1,0.04^2)$ & -15\% & 10\% &0.0425 &0.0419 &0.0604 &0.0440 &0.0348 \\
 
[5]&0.2& $N(-0.2,0.1^2)$ & -15\% & 10\% &0.0425 &0.0411 &0.0764 &0.0458 &0.0348 \\
 
[6]&0.2& $N(-0.4,0.15^2)$ & -15\% & 10\% &0.0425 &0.0382 &0.0979 &0.0496 &0.0348 \\
 
[7]&0.3& $N(-0.2,0.1^2)$ & -5\% & 5\% &0.0181 &0.0167 &0.0276 &0.0194 &0.0134 \\
 
[8]&0.3& $N(-0.2,0.1^2)$ & -10\% & 10\% &0.0322 &0.0299 &0.0519 &0.0352 &0.0239 \\
 
[9]&0.3& $N(-0.2,0.1^2)$ & -15\% & 10\% &0.0425 &0.0398 &0.0723 &0.0475 &0.0315 \\
 
[10]&0.3& $N(-0.4,0.15^2)$ & -15\% & 10\% &0.0425 &0.0364 &0.0922 &0.0543 &0.0315 \\
 
[11]&0.5& $N(-0.2,0.1^2)$ & -10\% & 10\% &0.0322 &0.0276 &0.0464 &0.0375 &0.0195 \\
 
[12]&0.5& $N(-0.2,0.1^2)$ & -15\% & 10\% &0.0425 &0.0367 &0.0639 &0.0512 &0.0258 \\
 
[13]&0.5& $N(-0.4,0.15^2)$ & -15\% & 10\% &0.0425 &0.0330 &0.0790 &0.0650 &0.0258 \\

\hline
\end{tabular}
\caption{Hedging Costs for Standard EPS under Jump Models}
\label{table:H0_jump}
\end{table}

In this table, protection legs and fee legs are given and they have different values that are chosen similarly in the numerical study for standard EPS under normal cases without jumps.
Jump intensities $\lambda$ are set up from ranges $0.1-0.5$, here we need to mention that we use the same $\lambda$ for the jump intensity under the jump-diffusion model in order to represent the default intensity $\gamma^Q$ under the random time default model.
Also, we assume the credit levels of the EPS provider and its counter party are the same here. The hedging costs under the random time default model are actually considered as the situation in that both the EPS provider and its counter party have default risks with $\gamma^c=\gamma^p=\gamma^Q=\lambda$, moreover, both call options and put options have default probabilities here.
In the meanwhile, we still use the normal distribution as jump size and we only consider negative jumps in the holding period such that mean values of jump sizes are negative. Under the assumption of jumps, both the values of call options and put options in the hedging portfolios will be influenced, no matter the directions of the positions.

Firstly, we can see the hedging costs $H(0)$(rt) under the random time default model are all smaller than under normal cases, this is because the values of both call and put options are lower under the random time default model.
In the meanwhile, the hedging costs $H(0)(\leq 1$j) under no more than jump situation are also a little bit smaller than under the Vanilla case, which are close to the hedging costs $H(0)$(rt). Although the hedging costs under the jump-diffusion model should be rising in general situations, the jump compensator will decrease the value of European put options and then the hedging costs. The influence of the jump compensator will be greater compared with jumps under no more than one jump situation because 
jump may not happen in this situation, especially when the jump compensator is large. 

Furthermore, we can see hedging costs $H(0)$(1j) under exactly one jump situation are much higher than under the Vanilla situation without jumps $H(0)$(Van), about 5-10 times. As we already see that the values of call options will decrease and the values of put options will increase under exactly one jump situation, our hedging portfolios contain short positions of call options and altogether long positions of put options, it can be imagined that hedging costs raise when there is exactly one jump in the holding period.

Under a jump-diffusion model with several possible jumps, the hedging costs $H(0)$(jd) are very close to those in the normal case, even compared with the hedging costs $H(0)(\leq 1$j) under no more than jump situation. Hedging costs for buffer EPS are a little bit higher under the jump-diffusion model than under the normal case, and hedging costs for floor EPS are slightly smaller under the jump-diffusion model. This is a very important finding for EPS providers, providers can use options under the jump model to hedge the possible gains or losses, which is much more suitable in the real market.

Because the hedging costs $H(0)$(jd) is close to under normal case without jumps, as a consequence, we can conclude that under the jump-diffusion model with several possible jumps is the best choice for EPS providers to structure hedging strategies when considering no default risk. With larger jump intensity $\lambda$ and all else equal, the hedging costs will become larger. In the meanwhile, the hedging costs $H(0)$(jd) will increase when negative jumps become greater and the mean value of jump size is more negative.

\subsection{Numerical Study for Hedging Costs with Default Probability} \label{sec6.5.2}

In this section, we consider default probabilities for both the counter party and the EPS provider. Here we will consider the underlying asset with jumps that use the jump-diffusion model as the basement. In the meanwhile, we will include the random time default event to represent default risks, the consideration of this default event is less complicated than the jump-diffusion model in which default events are independent of the underlying assets. For convenience, we can use different default intensities $\gamma^c$ and $\gamma^p$ for the counter party and the EPS provider, respectively, in order to present their credit levels.

Considering the analysis of different cases of default risks, here we make some comments: 
\begin{itemize}
\item An important assumption will be applied in the numerical study that the EPS provider will never default to EPS buyers. We have already mentioned that defaulting to investors (and even counter parties) is worthless for pension companies compared with their valuable goodwill. Under this assumption, we can make sure the protections or insurance fees for the EPS buyers still have the same structure as in normal cases and we can focus on the relationship between the EPS provider and the third party to get the fair initial premiums of EPS products.
\item Only the put options will be considered in jump models because we consider negative jumps only. In the meanwhile, the EPS provider always has short positions in call options under the hedging portfolio, defaulting in call options will benefit both the EPS provider and investors.  
\item If we consider the default probability of the counter party, we will set the default intensity $\gamma^c>0$ and $\gamma^p=0$. Oppositely, if we consider the credit risks of the EPS provider, we will set $\gamma^c=0$ and $\gamma^p>0$. The greater the value of the default intensity $\gamma$, the larger the probability of default. In different structures of EPS products, the EPS provider may not have the opportunity to default according to the hedging strategies, for example, buffer EPS.  
\item When we assume all financial institutions have default probabilities, we have both $\gamma^c>0$ and $\gamma^p>0$. However, we will consider $\gamma^c>\gamma^p$ in most cases, because the risk exposure for the counter party is larger than for the EPS provider, considering the positions of put options.
\end{itemize}

We need to focus on the influence of default risks in this section, thus for simplicity, we will use the same single assumption for the jumps under the jump-diffusion model. Moreover, we consider the underlying asset with no more than one jump in the holding period under the jump-diffusion model, the jump is assumed to be negative and relatively large, which is more consistent with the real market in a short period. We set the parameters for jumps to be jump intensity $\lambda=0.2$ and jump size satisfies the normal distribution that $Y_i \sim N(\alpha=-0.2,\delta^2=0.1^2)$. 
Now using the same parameters of standard EPS displayed in Table \ref{table:para_jump}, we can have our numerical study.
By considering different default probabilities of the EPS provider and the counter party, we will be given the following table with the fair initial premiums. We discuss three particular structures of the standard EPS products - buffer EPS, floor EPS, and floor-cap EPS, they have been defined in Section \ref{sec6.0}. Each line in the table represents one particular standard EPS product with different protection legs and fee legs, we introduce 5 buffer EPS, 5 floor EPS and 6 floor-cap EPS.

Here we include the hedging costs under the normal situation (no jumps and default risks) as a comparison, we use Vanilla to represent this situation.  
Moreover, we consider three default cases in this numerical study: only the counter party defaults, only the EPS provider defaults to the counter party (put option buyer), and both of them have default probabilities.
According to our comments for default risks before, we consider the counter party has a higher possibility of default than the EPS provider. So here we have the range of the counter party's jump intensity goes from $0.1$ to $0.5$, and the jump intensity of the EPS provider has a relatively small range from $0.05$ to $0.15$. When considering both of them have credit risks, we will set $\gamma^c>\gamma^p$.

Remember that the EPS provider does not need to short put options in the hedging portfolio for the buffer EPS, which means the EPS provider does not have the chance of defaulting to the counter party in such EPS products, we leave the default intensity of the EPS provider $\gamma^p$ case blank in this situation.

\begin{table}[h!]
\centering
\begin{tabular}{ccc|c|cccc|ccc|c}
\hline
\hline
\multicolumn{3}{c|}{Buffer} &
Van & 
\multicolumn{4}{c|}{Counter party $\gamma^c$} &
\multicolumn{3}{c|}{EPS provider $\gamma^p$} &
$(\gamma^c,\gamma^p)$ \\
 
\hline
No. &$l_1$&$g_1$& $H(0)$ & $0.1$ & $0.2$ & $0.3$ & $0.5$ & -& - & - & - \\
\hline
[1]& -5\% & 5\% & 0.0069 &0.0020 &-0.0017 &-0.0050 &-0.0107 & -& - & - & - \\

[2]& -5\% & 10\% & 0.0155 &0.0111 &0.0075 &0.0041 &-0.0016 & -& - & - & - \\

[3]& -10\% & 5\% & -0.0072 &-0.0107 &-0.0131 &-0.0154 &-0.0192 & -& - & - & - \\

[4]& -10\% & 10\% & 0.0014 &-0.0015 &-0.0040 &-0.0062 &-0.0101 & -& - & - & - \\

[5]& -10\% & 15\% & 0.0080 &0.0057 &0.0032 &0.0010 &-0.0029 & -& - & - & - \\

\hline
\hline

\multicolumn{3}{c|}{Floor} &
Van & 
\multicolumn{4}{c|}{Counter party $\gamma^c$} &
\multicolumn{3}{c|}{EPS provider $\gamma^p$} &
$(\gamma^c,\gamma^p)$ \\
 
\hline
No. &$l_1$&$g_1$& $H(0)$ & $0.1$ & $0.2$ & $0.3$ & $0.5$ & $0.05$& $0.1$ & $0.15$ & $(0.3,0.1)$ \\
\hline
[1]& -5\% & 5\% & -0.0144 &-0.0246 &-0.0298 &-0.0345 &-0.0426 & -0.0168 &-0.0148 &-0.0130 &-0.0305 \\

[2]& -5\% & 10\% & -0.0058 &-0.0155 &-0.0207 &-0.0254 &-0.0335 &-0.0077 &-0.0057 &-0.0038 &-0.0213 \\

[3]& -10\% & 5\% & -0.0003 &-0.0107 &-0.0158 &-0.0205 &-0.0286 &-0.0035 &-0.0022 &-0.0009 &-0.0178 \\
 
[4]& -10\% & 10\% & 0.0083 &-0.0015 &-0.0067 &-0.0114 &-0.0195 &0.0056 &0.0069 &0.0082 &-0.0087 \\

[5]& -15\% & 10\% & 0.0186 &0.0090 &0.0038 &-0.0009 &-0.0090 &0.0156 &0.0164 &0.0172 &0.0008 \\

\hline
\hline

\multicolumn{3}{c|}{Floor-Cap} &
Van & 
\multicolumn{4}{c|}{Counter party $\gamma^c$} &
\multicolumn{3}{c|}{EPS provider $\gamma^p$} &
$(\gamma^c,\gamma^p)$ \\
 
\hline
No. &$l_1$&$g_1$& $H(0)$ & $0.1$ & $0.2$ & $0.3$ & $0.5$ & $0.05$& $0.1$ & $0.15$ & $(0.3,0.1)$ \\
\hline
[1]& -5\% & 5\% & 0.0072 &0.0006 &-0.0045 &-0.0092 &-0.0173 & 0.0085 &0.0104 &0.0123 &-0.0052 \\

[2]& -5\% & 10\% & -0.0014 &-0.0085 &-0.0137 &-0.0184 &-0.0265 &-0.0007 &0.0013 &0.0032 &-0.0143 \\

[3]& -10\% & 10\% & 0.0127 &0.0055 &0.0003 &-0.0044 &-0.0125 &0.0126 &0.0140 &0.0152 &-0.0017 \\
 
[4]& -15\% & 10\% & 0.0230 &0.0160 &0.0108 &0.0061 &-0.0002 &0.0226 &0.0234 &0.0242 &0.0078 \\

[5]& -20\% & 10\% & 0.0299 &0.0234 &0.0182 &0.0135 &0.0054 &0.0296 &0.0301 &0.0306 &0.0145 \\

[6]& -15\% & 15\% & 0.0163 &0.0088 &0.0036 &-0.0011 &-0.0092 &0.0154 &0.0162 &0.0170 &0.0006 \\
 
\hline
\end{tabular}
\caption{Hedging Costs for Standard EPS with Default Risks under the Jump-Diffusion Model}
\label{table:H0_default}
\end{table}

From the Table \ref{table:H0_default}, we can see that the hedging costs will decrease, under the situation that the counter party has default probability $\gamma^c>0$. After considering the credit risks of the counter party, the values of put options sold by the counter party drop down, which decreases the EPS provider's hedging costs of the whole hedging portfolio. In the meanwhile, the price of put options will become lower if the credit levels of the put option seller decrease. Thus when the default intensity of the counter party $\gamma^c$ increases, it means the counter party's default probability rises, the hedging costs will decrease with all else equal. 

Secondly, if we consider the floor EPS and the floor-cap EPS that the EPS provider has default risks, the hedging costs are increasing. Opposite to the counter party default risks, the hedging costs rise when the default probability of the EPS provider increases as the EPS provider needs to sell put options at a lower price.  
Furthermore, the influences of the default intensity of both the counter party $\gamma^c$ and the EPS provider $\gamma^p$ to the floor EPS and the floor-cap EPS are the same, with the same added or reduced amount. Because the structures of the protection legs are the same under these two standard EPS products, with the assumption that call options will never default, we can make the above conclusion. 
Although the default risks from the EPS provider to the counter party give a stimulus to the rising of the hedging costs, the hedging costs under such situations may still lower than those under the Vanilla case, see $\gamma^p=0.05$ and $\gamma^p=0.1$ in Table \ref{table:H0_default}. Remember that we consider the jump-diffusion model in this numerical study, underlying this phenomenon is the fact that the influence of jumps is higher than the default risks of the EPS provider, when the default intensity $\gamma^p$ is small. 

Thirdly, for the buffer EPS, the influence of the counter party's default intensity will become larger if the protection leg $l_1$ decreases (protections of EPS products increase). It can be explained that the EPS provider should purchase more put options so that they can hedge the greater levels of potential protection.
But the protection leg $l_1$ does not affect the influence of the counter party's default intensity for the floor EPS or the floor-cap EPS. This is according to the structure of the protection legs, the EPS provider need to purchase different put options with different strike value in order to hedge the protection from the buffer EPS. However, for the floor EPS and the floor-cap EPS, the put options that were bought from the counter party are the same and the EPS provider will sell different put options for different floor levels. 
Similarly, we may add another conclusion here that the different protection legs $l_1$ in the floor EPS and the floor-cap EPS will make the influence of the default intensities of the EPS provider $\gamma^p$ different. With more protections provided by EPS products (lower the protection legs $l_1$ for the floor EPS and the floor-cap EPS), the influence of the credit levels of the EPS provider becomes smaller. The EPS provider will sell less put options in order to provide higher levels of protection to the EPS buyer, in this case, the influence of default intensity $\gamma^p$ becomes smaller.

When we consider both the counter party and the EPS provider have default probabilities to the other side, here we only have one situation $(\gamma^c,\gamma^p)=(0.3,0.1)$. In this situation, we assume the credit level of the EPS provider is higher than their counter party, which is satisfied our comments before. Then with opposite directions of the influence of default risks from the counter party and the EPS provider, the value of hedging costs will still decrease. First of all, it is because the default intensity of the counter party $\gamma^c$ is higher. In addition, the EPS provider always has a net risk exposure of put options (net long positions of put options) from the counter party in order to hedge their potential protection to the EPS buyers. Thus the influence of the counter party's credit level should be greater than the EPS providers'.

It looks like a contradiction to our previous cognition that the hedging costs would become cheaper rather than more expensive if the counter party has the possibility of default, however, the hedging costs will increase if the EPS provider has default risks. 
Under our assumption, the EPS provider will never default to EPS buyers so that we only discuss the relationship between the EPS provider and the counter party. Actually, the values of the put options we used here have been calibrated for default risks, which means an invisible hypothesis has been applied here that no more default events exist after the calibration. In this situation, the after-hedge cash flows of the EPS provider are always equal to zero under a fair initial premium, and the hedging strategy can fully hedge the potential gains or losses of the EPS provider.
Thus we only consider that the prices of put options containing default risks would be lower, the higher credit risks from the counter party would make the hedging costs smaller. Oppositely, the higher credit risks from the EPS provider would increase the hedging costs.

\subsection{Numerical Study for Default-Adjusted Initial Premiums} \label{sec6.5.5}

From the last section, we analyze the influence of default risks on the hedging costs of the standard EPS products under the jump-diffusion model and we can conclude that the hedging costs with the consideration of default risks are lower than under the Vanilla case. However, it is unfair for the EPS provider that they need to undertake the default risks from the counter party with these lower hedging costs.
Fortunately, although the EPS provider cannot use a static hedge strategy to fully hedge the potential gains or losses, a default-adjusted initial premium has been introduced in order to cover the potential loss after hedging due to default risk and we will do a numerical study on this default-adjusted initial premium now. 

This numerical study can be regarded as an extension of the last section and we will use the same assumptions here. Firstly, only the counter party has default risks that only the EPS provider's long positions of the European put options with consideration of default risks. The predetermined parameters for standard EPS products are given in Table \ref{table:para_jump} and the parameters for the default conditions will be given in the numerical study table. In the meanwhile, we still consider several different cases of jumps, a realistic situation that considers no more than one jump, a simple case that exactly one jump in the holding period of the underlying asset, and the general case with several jumps (here we assume no more than 20, consistent with above section).

\begin{table}[h!]
\centering
\begin{tabular}{c|cc|ccccccc}
\hline
\hline
Buffer & $\lambda$ &$\gamma^c$ & Van & $DA_0$ & $c^D_0$ & $DA_1$ & $c^D_1$ & $DA_n$ & $c^D_n$ \\
\hline
[1]&0.1 & 0.05 & 0.0155 &0.0130 &0.0269 &0.0216 &0.1198 &0.0131 &-0.0466 \\

[2]&0.1 & 0.1 & 0.0155 &0.0254 &0.0372 &0.0421 &0.1349 &0.0255 &-0.0349 \\

[3]&0.1 & 0.2 & 0.0155 &0.0484 &0.0562 &0.0801 &0.1628 &0.0485 &-0.0131 \\

[4]&0.1 & 0.3 & 0.0155 &0.0691 &0.0734 &0.1145 &0.1881 &0.0694 &0.0067\\

[5]&0.1 & 0.5 & 0.0155 &0.1050 &0.1031 &0.1739 &0.2318 &0.1054 &0.0407 \\

[6]&0.2 & 0.1 & 0.0155 &0.0249 &0.0357 &0.0408 &0.1255 &0.0253 &-0.0275 \\

[7]&0.2 & 0.2 & 0.0155 &0.0475 &0.0542 &0.0778 &0.1531 &0.0482 &-0.0063 \\

[8]&0.2 & 0.3 & 0.0155 &0.0679 &0.0710 &0.1112 &0.1780 &0.0689 &0.0129 \\

[9]&0.2 & 0.5 & 0.0155 &0.103 &0.0999 &0.1689 &0.2210 &0.1045 &0.0459 \\

[10]&0.3 &0.1 & 0.0155 &0.0243 &0.0330 &0.0395 &0.1162 &0.0251 &-0.0195 \\

[11]&0.3 &0.2 & 0.0155 &0.0464 &0.0510 &0.0753 &0.1432 &0.0479 &0.0011 \\

[12]&0.3 &0.5 & 0.0155 &0.1007 &0.0954 &0.1634 &0.2097 &0.1039 &0.0518 \\
 
[13]&0.5 &0.2 & 0.0155 &0.0437 &0.0417 &0.0697 &0.1228 &0.0476 &0.0011 \\

[14]&0.5 &0.5 & 0.0155 &0.0949 &0.0835 &0.1514 &0.1860 &0.1032 &0.0518 \\

\hline
\hline

Floor & $\lambda$ & $\gamma^c$ & Van & $DA_0$ & $c^D_0$ & $DA_1$ & $c^D_1$ & $DA_n$ & $c^D_n$ \\
\hline

[1]&0.1& 0.05 & 0.0186 &0.0167 &0.0306 &0.0242 &0.0914 &0.0168 &-0.0410 \\

[2]&0.1& 0.1 & 0.0186 &0.0326 &0.0435 &0.0473 &0.1076 &0.0327 &-0.0260 \\
 
[3]&0.1& 0.2 & 0.0186 &0.0621 &0.0674 &0.0900 &0.1376 &0.0623 &0.0017 \\

[4]&0.1& 0.3 & 0.0186 &0.0888 &0.0891 &0.1288 &0.1647 &0.0890 &0.0268 \\

[5]&0.1& 0.5 & 0.0186 &0.1349 &0.1264 &0.1955 &0.2116 &0.1352 &0.0701 \\

[6]&0.2& 0.1 & 0.0186 &0.0319 &0.0404 &0.0464 &0.1029 &0.0322 &-0.0201 \\
  
[7]&0.2& 0.2 & 0.0186 &0.0608 &0.0636 &0.0883 &0.1329 &0.0613 &0.0066 \\

[8]&0.2& 0.3 & 0.0186 &0.0869 &0.0847 &0.1262 &0.1600 &0.0877 &0.0307 \\
 
[9]&0.2& 0.5 & 0.0186 &0.1319 &0.1209 &0.1917 &0.2067 &0.1331 &0.0724 \\

[10]&0.3& 0.1 & 0.0186 &0.0311 &0.0368 &0.0453 &0.0979 &0.0318 &-0.0141 \\
 
[11]&0.3& 0.2 & 0.0186 &0.0593 &0.0594 &0.0863 &0.1276 &0.0605 &0.0116 \\
 
[12]&0.3& 0.5 & 0.0186 &0.1287 &0.1150 &0.1873 &0.2009 &0.1313 &0.0748 \\
 
[13]&0.5& 0.2 & 0.0186 &0.0560 &0.0494 &0.0816 &0.1155 &0.0592 &0.0221 \\
 
[14]&0.5& 0.5 & 0.0186 &0.1215 &0.1019 &0.1772 &0.1868 &0.1284 &0.0799 \\

\hline
\hline
FC & $\lambda$ & $\gamma^c$ & Van & $DA_0$ & $c^D_0$ & $DA_1$ & $c^D_1$ & $DA_n$ & $c^D_n$ \\
\hline

[1]&0.1& 0.05 & 0.0230 &0.0167 &0.0362 &0.0242 &0.0907 &0.0168 &-0.0020 \\

[2]&0.1& 0.1 & 0.0230 &0.0326 &0.0491 &0.0473 &0.1069 &0.0327 &0.0130 \\

[3]&0.1& 0.2 & 0.0230 &0.0621 &0.0730 &0.0900 &0.1369 &0.0623 &0.0407 \\

[4]&0.1& 0.3 & 0.0230 &0.0888 &0.0947 &0.1288 &0.164 &0.089 &0.0658 \\

[5]&0.1& 0.4 & 0.0230 &0.1130 &0.1143 &0.1638 &0.1886 &0.1133 &0.0885 \\

[6]&0.1& 0.5 & 0.0230 &0.1349 &0.1320 &0.1955 &0.2108 &0.1352 &0.1091 \\

[7]&0.2& 0.1 & 0.0230 &0.0319 &0.0474 &0.0464 &0.1025 &0.0322 &0.0166 \\

[8]&0.2& 0.2 & 0.0230 &0.0608 &0.0707 &0.0883 &0.1324 &0.0613 &0.0433 \\

[9]&0.2& 0.3 & 0.0230 &0.0869 &0.0917 &0.1262 &0.1595 &0.0877 &0.0675 \\

[10]&0.2& 0.5 & 0.0230 &0.1319 &0.1279 &0.1917 &0.2062 &0.1331 &0.1091 \\

[11]&0.3& 0.1 & 0.0230 &0.0311 &0.0453 &0.0453 &0.0978 &0.0318 &0.0205 \\

[12]&0.3& 0.2 & 0.0230 &0.0593 &0.0679 &0.0863 &0.1275 &0.0605 &0.0461 \\
 
[13]&0.3& 0.5 & 0.0230 &0.1287 &0.1236 &0.1873 &0.2008 &0.1313 &0.1094 \\

[14]&0.5& 0.2 & 0.0230 &0.0560 &0.0615 &0.0816 &0.1165 &0.0592 &0.0526 \\

[15]&0.5& 0.5 & 0.0230 &0.1215 &0.1140 &0.1772 &0.1878 &0.1284 &0.1105 \\

\hline
\end{tabular}
\caption{Default Adjusted Initial Premiums for Standard EPS under Jump Models}
\label{table:def_c_jd}
\end{table}

Table \ref{table:def_c_jd} gives us the solutions of the numerical study, the parameters of the jump intensity $\lambda$ for the jump-diffusion model and the default intensity for EPS provider's counter party $\gamma^c$. As a remark, here we assume the jump size to be the same that satisfies the normal distribution that $Y_i \sim N(\alpha=-0.2,\delta^2=0.1^2)$ in the numerical study for default adjusted initial premiums. Because we want to analysis the influence of jumps and default events, we set the fee level $g_1$ and the protection level $l_1$ to be fixed for simplicity. We use $l_1=-5\%$ and $g_1=10\%$ for the buffer EPS and $l_1=-15\%$ and $g_1=10\%$ for the floor and the floor-cap EPS in the numerical study.
WeI’m  have $0$ at the right bottom to represent no more than one jump, $1$ to represent exactly one jump, and $n$ to represent several jumps. We include both expected loss, which has also been discussed in Section \ref{sec6.4.4}, and default-adjusted initial premium $c^D$ under different default events. The fair initial premiums $\widehat{c}$ under the Vanilla case are also provided as benchmarks.

Initially, the default adjustments under an independent default event are very large and make the default-adjusted initial premiums $c^D$ higher than the fair initial premiums under the Vanilla case. The default intensity $\gamma^c$ affects the probability of defaults directly and the greater the default intensity $\gamma^c$ rapidly increases the default adjustments, which has a profound effect on the initial premium compared with jumps. 

Subsequently, comparing three different jump situations, we can see that the default adjustments under exactly one jump situation are the largest, followed by several jumps (here no more than 20 jumps), and then no more than one jump situation. In the meanwhile, the default adjustments $DA_0$ and $DA_n$ are very close, because no jump case actually has the largest proportion (around $90\%$) among these two situations, and this also makes the default adjustments relatively small. Apart from that, the large jump intensity $\lambda$ makes the value of put options increase but the default adjustments decrease (also related to the first protection leg with a participation rate greater than zero $\widehat{l}$).
Consequently, with smaller default intensity $\gamma^c$ and greater jump intensity $\lambda$, the default adjustments $DA$ under an independent default event will be smaller.

Moreover, when we consider the default adjusted initial premium, it cannot be neglected that the default adjusted initial premiums under exactly one jump in the underlying asset situation are very large compared with the fair initial premiums under the Vanilla case. There is a $0.06-0.14$ higher under exactly one jump situation than the fair initial premium under the Vanilla case, for per unit of the nominal principal, according to different values of default intensity $\gamma^c$. In the meanwhile, no more than one jump situation follows the exactly one jump situation, under which the default adjusted initial premiums are also higher than the fair initial premiums. However, with a smaller default intensity $\gamma^c$ of the counter party and a smaller jump intensity $\lambda$, the default adjusted initial premiums would be smaller and closer to the fair initial premium under the Vanilla case, under which the hedging costs decrease more rapidly.

As a remark, the default adjusted initial premium with several jumps $c^D_n$ is the smallest one among the three situations, rather than $c^D_0$ considering default adjustment. Surprisingly, the default adjusted initial premiums under the several jumps situation are lower than the fair initial premiums in some cases, even becoming a negative value. It tells us that the hedging costs under the several jumps situation with independent random time default are very small, and the rising number of jumps makes the hedging costs decrease rapidly. 
Unlike the default adjustments, the default adjusted initial premium $c^D_n$ under the several jumps situation is closer to the fair initial premium $\widehat{c}$ under the Vanilla case when the default intensity for the counter party $\gamma^c$ is relatively high, while $c^D_0$ under no more than one jump case will be less different to the fair initial premium.

Ultimately, we can see that the independent default event will cause a large expected loss. In this case, the default adjusted initial premiums $c^D$ under the independent default event are greater than the fair initial premiums $\widehat{c}$. In order to make this adjustment more meaningful, we should consider a representative situation - no more than one jump when small default intensity $\gamma^c$ and several jumps when high default intensity $\gamma^c$, in which default adjusted initial premiums $c^D$ will be closer to the Vanilla case. Furthermore, we can find some appropriate $n$ - number of jumps - to make the default adjustment initial premium more reasonable.
This conclusion can also be explained in the real market, we can manage both the default intensity and jump intensity. We can see that the default event is less likely to happen in real life and the EPS provider can choose several different counter parties to do risk diversification, they will give the EPS provider a small value of default intensity $\gamma^c$. In the meanwhile, huge negative jumps are actually not very common which makes the jump intensity $\lambda$ small and small jump size and the investor can choose the underlying asset with different characteristics that satisfy their needs. In this case, we can consider the default adjustment initial premium under different jump situations and default intensities, which is consistent with our conclusion.

\section{Conclusion} \label{sec8}

In this paper, the studies of {\it equity protection swaps} (EPS) have been expanded into a more general situation - with consideration of jumps and default risks. An EPS is an insurance product for the variable annuity that provides protection against potential losses and collects insurance fees from actual gains. Compared with guarantees of variable annuities in the U.S. market, EPS has a simpler structure and can be purchased independently of the variable annuity itself. Moreover, it should be a more suitable product for the superannuation market in Australia, with 3.5 trillion Australian dollars (AUD) of total assets, that EPS products are only a kind of insurance product with relatively tiny initial costs compared with nominal principal rather than an annuity ({\it Registered Index-Linked Annuities} (RILAs)). 

According to Xu et al. \cite{Xu2024}, the EPS product can be hedged statically by simply using European call and put options. However, the difference between the valuation time and hedging time cannot be avoided and this time difference makes our pricing based on hedging strategies inconsistent with the real value of the EPS product. Thus we consider random jumps in the underlying asset in order to justify the influence of time difference. Furthermore, such random jumps can simulate the financial crisis that really happens in the market. After carefully analysing, we can make a conclusion that the inclusion of random jumps will not affect static hedging strategies and we get the hedging costs under the jump-diffusion model by using some proper parameters in the numerical study. Because only the valuation of European options will influence the hedging costs of the standard EPS, we start with the pricing of options under the jump model and we can see the influence of jumps to the option values.

We get the main result under the jump-diffusion model that the hedging costs for EPS products with consideration of several possible jumps are very close to the Vanilla situation without jumps. This result could build the EPS providers' confidence in adding financial crisis and the time difference between valuation and hedging into consideration and provide the EPS providers' possibility of resisting more risks. In the real market, EPS providers can only long or short options with existing prices to build hedging portfolios and calibrate the option prices in their own models. We cannot tell the real market model, but it should further consider lots of risks and parameters upon the Vanilla Black-Scholes valuation model. Thus EPS providers can believe in using the jump-diffusion model with jump intensity $\lambda$ under their thoughts, such a jump model will give a hedging cost close to under the Vanilla model with consideration of more possible risks inside.

Furthermore, with random jumps inside the valuation of underlying assets, there is one more important risk that cannot be omitted - credit risks. If a financial crisis happens, the EPS product should provide protection against huge losses which is the significance of introducing insurance. However, we cannot guarantee the actual execution intensity of the EPS provider as they face large losses for the protection, which is related to the credit risks of the EPS provider. In the meanwhile, the EPS provider builds static hedging portfolios to hedge their potential risks and they need to consider the credit risks of their counter parties, especially put option sellers. Moreover, the EPS provider could also be an option seller in regard to their counter parties, related to the structure of EPS products. Altogether, there are credit risks for both EPS providers and their counter parties, we use different default intensities to present different default probabilities of EPS providers and their counter parties.

In this paper, we focus on a more meaningful situation in which the EPS provider will never default to the EPS buyers and only the transactions between the EPS provider and their counter parties have default probabilities. For simplicity, only independent default under Szimayer's \cite{Szi2005} random time default model has been analyzed. From our numerical studies for hedging costs under default risks, we find that the hedging cost will decrease when the counter party has a higher default probability and oppositely, it will increase when the EPS provider has default risks. Although we can find the hedging costs with default risks, the EPS provider cannot fully hedge the credit risks by building the hedging portfolios and such default events with small probability will bring huge losses. 

Fortunately, we can find the expected loss after hedging, assuming there is a default probability, and add this into the hedging cost to get a default-adjusted initial premium. With consideration of counter party's default risks, this default-adjusted initial premium will be greater than the hedging costs and the additional part is the compensation for the EPS provider's credit risk undertaken. However, compared with the fair initial premium under the Vanilla case, the default-adjusted initial premium may not be more expensive because of the default compensator. With such small adjustments to the fair initial premium under the Vanilla case, the EPS provider can include the default risks into consideration and the EPS buyer may even spend less new default-adjusted initial premium. 

As a remark, we need to mention that we only consider the simplest case of default event - independent default - in this work, we may consider some more reliable situations in our further study. It is needless to say that the default event is always related to the financial crisis, financial institutions may be unable to bear huge losses during a financial storm and then the default event happens. So we have another case of the default event, which is related to jumps (we used to represent a financial crisis), that we assume the default happens if the jump size is large enough. Moreover, we need to consider that the European-type options only have one settlement date - maturity, the significant price changes of the underlying asset in the holding period may not really influence the end-of-period closing. Considering the value of options may recover or the option seller has enough time to undertake the losses, we should use the second method to present default risks that the default event is related to the maturity value of the underlying asset. These studies will make the EPS products satisfy the requirements of more general situations and be truly applied to the superannuation market.


\end{document}